\newfont{\handw}{cmmi10 scaled 1200}
\newtheorem{Prop}{Proposition}[section]
\newtheorem{Lem}[Prop]{Lemma}
\newtheorem{LemDef}[Prop]{Lemma and Definition}
\newtheorem{Th}[Prop]{Theorem}
\newtheorem{Rm}[Prop]{Remark}
\newtheorem{Def}[Prop]{Definition}
\newtheorem{Cor}[Prop]{Corollary} 
\newfont{\smcal}{cmu10 scaled 1200}
\newcommand{\cov}{\operatorname {COV}}
\newcommand{\re}{\operatorname {Re}}
\newcommand{\im}{\operatorname {Im}}
\newcommand{\grad}{\operatorname {grad}}
\newcommand{\diag}{\operatorname {diag}}
\newcommand{\st}{\operatorname {such~that~}}
\newcommand{\tr}{\operatorname {tr}}
\newcommand{\rank}{\operatorname {rank}}
\newcommand{\argmin}{\operatorname {argmin}}
\begin{document}
	\title{Inference on 3D Procrustes Means: \\Tree Bole Growth, Rank Deficient Diffusion Tensors and Perturbation Models 
	\\ {\small Dedicated to the Memory of Herbert Ziezold (1942 -- 2008)}}
   \author{Stephan Huckemann \footnote{Supported by Deutsche Forschungsgemeinschaft Grant MU 1230/10-1  and DFG Graduate School 1023. Part of this research is contained in the habilitation thesis of the author.}}
\date{}
    \maketitle

\begin{abstract} 


	The Central Limit Theorem (CLT) for extrinsic and intrinsic means on manifolds is extended to a generalization of Fr\'echet means. Examples are the Procrustes mean for 3D Kendall shapes as well as a mean introduced by Ziezold. This allows for one-sample tests previously not possible, and to numerically assess the `inconsistency of the Procrustes mean' for a perturbation model and `inconsistency' within a model recently proposed for diffusion tensor imaging. Also it is shown that the CLT can be extended to mildly rank deficient diffusion tensors. An application to forestry gives the temporal evolution of Douglas fir tree stems tending strongly towards cylinders at early ages and tending away with increased competition.  

\end{abstract}
\par
\vspace{9pt}
\noindent {\it Key words and phrases:}
Kendall's Shape Spaces, 
Central Limit Theorem, Bootstrap Confidence Intervals, 
Strong Consistency, Non-Parametric Inference, Ziezold Means, Forest Biometry, 
Tree-Stem Ellipticality 
\par
\vspace{9pt}Figure 
\noindent {\it AMS 2000 Subject Classification:} \begin{minipage}[t]{6cm}
Primary 62H11\\ Secondary 62G20, 62H15
 \end{minipage}
\par

\section{Introduction}\label{intro-scn}

	The study of descriptors of geometrical objects encompassed by terms of \emph{form} or \emph{shape} be it for artisanry, biological and morphological interest or medical applications dates back to the very origins of mankind. It took, and 
	it still takes, however, until modern days to fully realize, that underlying most seemingly intuitive concepts of shape are rather counter-intuitive non-Euclidean geometries. Adding to counter-intuition,  going from 2D to 3D shapes, these geometries cease to be well behaved. For this reason, 
	the statistical analysis of 3D shapes is highly challenging and has gained much less attention in theory and practice. 

	This work has been motivated by a joint research with the Institute for Forest Biometry and Informatics at the University of G\"ottingen studying the temporal evolution of the 3D shape of tree boles over their growing periods. The task tackled here 
	is to assess their growth towards and away from cylinders over time. This research is not only of interest in understanding fundamentals of biological growth and subsequent model building, cf. \cite{SH98,KoiHir06}. Also, the deviation from cylindricity 
	has a direct economical impact by reducing log volume and increasing the number of turns to reposition logs for commercial sawing processes, cf. 
	\cite{RBWE07}.


	While Procrustes analysis is a well established tool for the statistical analysis of shape (e.g. \cite{DM98} and \cite{Dshapes} for numerical routines), to the knowledge of the author, for 3D there are no asymptotic results available. Rather there is the belief that Procrustes sample means are unqualified for inference even in 2D because they may be ``inconsistent estimators'' of the ``shape of the mean'' -- the latter being the mean of a \emph{perturbation model} -- unless the error is isotropically distributed, see \cite{Lele93}, \cite{KM97} as well as \cite{Le98}. This perturbation model introduced by \cite{G91} and  
	discussed in detail in Section \ref{pert_mods:scn} (cf. (\ref{perturbation_model:eq}) on page \pageref{perturbation_model:eq}), assumes error on the original configurations. The results cited above allowed inference based on Procrustes means only for a limited set of 2D scenarios, e.g. when randomness is caused by a procedure of acquiring landmarks independent of rotation and location. For more general settings as often occur in applications of biology, e.g. when randomness is also the result of non-uniform biological growth, however, Procrustes means could only contribute to data description while they were not considered for use in inferential statistics.

	Such more general applications in mind, \cite{BP03,BP05} established an asymptotic theory for \emph{intrinsic} and \emph{extrinsic} means on manifolds allowing for 
	\emph{non-parametric} asymptotic inference. Here  ``non-parametric'' stands for the general approach not silently relying on a mean given by a perturbation model in the configuration space. Curiously in 2D, Procrustes means coincide with extrinsic means, the latter being ``consistent'' in the statistical sense i.e. satisfying a Strong Law of Large Numbers,  
	see \cite{Z77}, \cite{Le98} as well as \cite{BP03}.  
	Due to the dominating paradigm of the perturbation model, it seems 
	that the notion of a Procrustes \emph{population} mean has not quite found its way into the community. 

	Recently, a perturbation model has also motivated a non-standard approach in the context of neuro-imaging by \cite{DKZL09}, employing Procrustes analysis for estimating a mean diffusion tensor. 
	Although ``perturbation consistency'' has not been established, 
	practical imaging results in particular for nearly degenerate tensors have been of very good quality.

	It is the intention of this work to
	\begin{enumerate}
	 \item[(a)] make 3D Procrustes means with their well behaved asymptotics available for the practioners in the field,
	 \item[(b)] to provide for an assessement of ``inconsistency'' with perturbation models  on Kendall's shape spaces as well as for the approach of \cite{DKZ09} for diffusion tensor imaging, and
	\item[(c)] to derive a framework to obtain the temporal shape motion relative to the shape of cylinders for a sample of Douglas fir tree stems. 
	\end{enumerate}

	To this end, the following Section \ref{GPA:scn} reviews Procrustes analysis and related results. Section \ref{SLLN_CLT_PZ_mean:scn} places Procrustes means and a mean introduced by \cite{Z94} in the context of Fr\'echet means and gives the Central-Limit-Theorem (CLT) as the main theoretical result. Here, two aspects require special attention: first, a CLT can only hold on manifolds  while in 3D, the underlying shape spaces cease to be that; secondly, since 3D Procrustes means are neither intrinsic nor extrinsic means, the CLT of \cite{BP05} is accordingly modified as detailed in the appendix.

	In Section \ref{pert_mods:scn} it is clarified 
	that the reported ``(in)consistency'' of Procrustes means expresses  ``(in)compatibility'' of a perturbation model with the canonical shape space geometry. Moreover for 3D, simulations based on the CLT show 
	that for isotropic errors the perturbation model 
	can  be considered nearly compatible in most practical situations unless the shape of the perturbation mean is almost degenerate. 
	In contrast by simulation in Section \ref{midly:scn}, a similar perturbation model for mean diffusion tensors 
	is not compatible, not even for small isotropic errors. Curiously, however, with increased degeneracy of the perturbation mean considered, incompatibility seems not increasing. This is also the case for an extension of the model to \emph{mildly rank deficient} diffusion tensors, presented here.

	Section \ref{Conical_boles:scn} introduces a framework for the assessment of shape of frusta cut from tree boles. It turns out that the shapes of cylinders form a geodesic in the shape space allowing for the computation of the distance of arbitrary frusta to the space of cylinders. 
	For statistical inference, Boostrap confidence intervals for this distance are simulated from a sample of reconstructed tree ring structures of small size. Comparison with two typical growth scenarios gives that 
	the bole shape of young trees with little competition grows uniformly or stronger towards cylinders, while the motion of shapes of old tree boles with heavy competition away from cylinders is similar to a growth maintaining cross-sectional ellipticality and tapering.

\section{Procrustes Analysis and Kendall's Shape Spaces}\label{GPA:scn}
	
	In the statistical analysis of similarity shapes based on landmark configurations, geometrical $m$-dimensional objects (usually $m=2,3$) are studied by placing $k>m$ \emph{landmarks} at specific locations of each object. Each object is then described by a matrix in the space $M(m,k)$ of $m\times k$ matrices, each of the $k$ columns denoting an $m$-dimensional landmark vector. $\langle x,y\rangle := \tr(xy^T)$ denotes the usual inner product with norm $\|x\| = \sqrt{\langle x,x\rangle}$. For convenience and without loss of generality for the considerations below, only \emph{centered} configurations are considered. Centering in way treating all landmarks equally can be achieved by multiplying with a sub-Helmert matrix ${\cal H}$

	\begin{eqnarray}\label{Helmert_matrix:def}{\cal H} &=& \left(\begin{array}{cccc}
		   \frac{1}{\sqrt{2}} &\frac{1}{\sqrt{6}} & \dots& \frac{1}{\sqrt{k(k-1)}} \\ 
	   -\frac{1}{\sqrt{2}} &\frac{1}{\sqrt{6}} & \dots& \frac{1}{\sqrt{k(k-1)}} \\ 
	   0&-\frac{2}{\sqrt{6}} &\dots& \frac{1}{\sqrt{k(k-1)}}\\ 
	   \vdots&\vdots&\ddots&\vdots \\
	   0&0 &\dots& -\frac{k-1}{\sqrt{k(k-1)}}
	            \end{array}\right) \in M(k,k-1)
	\end{eqnarray}
		from the right, yielding $x{\cal H}$ in $M(m,k-1)$. This corresponds to the relocating of objects in space such that their mean landmark is zero and isometrically projecting the linear sub space of matrices with vanishing mean landmark to the space of matrices with $k-1$ of landmarks. For this method and other centering methods cf. \citet[Chapter 2]{DM98}. Excluding also all matrices with all landmarks coinciding gives the space of \emph{configurations} 
	\begin{eqnarray*}
	F_m^k&:=& M(m,k-1) \setminus \{0\} \,.
	\end{eqnarray*}

	Since only the similarity shape is of concern, all configurations are considered modulo the group of similarity transformations $H = SO(m) \times \mathbb R_+$ (recall that we excluded w.l.o.g. translations) where the action is given by the 
	operation $(g,\lambda)x= g\lambda x.$
	 Here, $SO(m)$ denotes the special orthogonal group (the orientation preserving orthogonal transformations). The  \emph{shape} of a configuration $x\in F_m^k$ is the orbit $[x] = \{hx:h\in H\}$, the \emph{shape space} is the quotient $F_m^k/H = \{[x]:x\in F_m^k\}$ with a suitable metric.

	\paragraph{A na\"ive shape space.}
	A straightforward metric structure for $F_m^k/H$ is given by the canonical quotient metric:
	$$\inf_{h_1,h_2\in H} \|h_1x_1 -h_2x_2\|\mbox{ for }x_1,x_2\in F_m^k\,.$$
	Unfortunately, due to the scaling action of $\mathbb R_+$, this metric is identically zero, a dead end for statistical ambition.

	\paragraph{General Procrustes analysis (GPA).}
	 As a workaround, \cite{Gow} introduced a constraining condition which allowed for the definition of a mean. For a sample of configurations $x_1,\ldots,x_n\in F_m^k$ a \emph{full Procrustes sample mean} is given by the shape of
	 $$\frac{1}{n}\sum_{j=1}^n h^*_jx_j$$
	where $h^*_1,\ldots,h^*_n$ are minimizers over $h_1,\ldots, h_n \in H$ of the \emph{Procrustes sum of squares} 
	\begin{eqnarray*}
	 \sum_{i,j=1}^n\|h_jx_j -h_ix_i\|^2 \mbox{ under the constraining condition } \left\|\frac{1}{n}\sum_{j=1}^n h_jx_j\right\| ~=~ 1\,.
	\end{eqnarray*}
	Letting $\mu = \frac{1}{n}\sum_{j=1}^n h_jx_j$, $h_j = (g_j,\lambda_j)$, the Procrustes sum of squares is 
	$$ \sum_{i,j=1}^n\|h_jx_j -h_ix_i\|^2 = 2n\sum_{j=1}^n\|h_jx_j -\mu\|^2 = 2n\sum_{j=1}^n\|\lambda_jg_jx_j -\mu\|^2\,.$$ 
	W.l.o.g. we may assume that all configurations are contained in the unit sphere
	$$ S_m^k :=\{x\in M(m,k-1): \|x\|=1\}\,.$$ 
	Then, any minimizing $\mu$ is the orthogonal projection of the mean of minimizing $h_1x_1,\ldots, h_nx_n$ to $S_m^k$. In consequence, minimization can be performed sequentially, first for the $\lambda_j$, then for the $g_j$ and finally for $\mu$, as noted. Partial differentiation in particular gives the minimizing $\lambda^*_j = \langle g_jx_j,\mu\rangle >0$ (unless $\langle gx_j,\mu\rangle = 0 $ for all $g\in SO(m)$) such that every $\mu^*$ having the shape of a  full Procrustes sample mean is a minimizer of
	$$ \min_{\mu\in S_m^k} \min_{\footnotesize\begin{array}{l}g_1,\ldots,g_n \in SO(m)\\\langle g_ix_i,y\rangle \geq 0,i=1,\ldots,n\end{array}} \sum_{j=1}^n \| \langle g_jx_j,\mu\rangle \,g_jx_j -\mu\|^2 = \min_{\mu\in S_m^k} \sum_{j=1}^n \delta(x_j,\mu)^2$$
	with the \emph{residual distance}
	\begin{eqnarray}\label{res_dist:def}
	\delta(x,y) &:=& \min_{\footnotesize\begin{array}{l}g\in SO(m)\\\langle gx,y\rangle \geq 0\end{array}}\|\langle gx,y\rangle gx - y\|\,.
	\end{eqnarray}

	\paragraph{Kendall's shape spaces.}
	\cite{K77} proposed a slightly different work\-around. Instead of considering the quotient w.r.t. the action of $\mathbb R_+$, he projected all configurations to $S_m^k$ called the \emph{pre-shape sphere} and considered the metric quotient w.r.t. $SO(m)$ only, which is called \emph{Kendall's shape space}
	$$\Sigma_m^k := S_m^k/SO(m) = \{[x]:x\in S_m^k\}\mbox{ with the \emph{fiber} } [x] = \{gx:g\in SO(m)\}\,.$$
	Kendall's shape space is thus a quotient of a sphere allowing for two non-trivial canonical quotient metrics
	$$d_{\Sigma_m^k}([x],[y]) := \inf_{g,h\in SO(m)} d(gx,hy) = \inf_{g\in SO(m)} d(gx,y)\,,$$
	$d$ denoting either the Euclidean distance on $F_m^k$ or the \emph{spherical} distance on $S_m^k$:
	$$d^{(e)}_{F_m^k}(x,y) :=\|x-y\|,\quad d^{(s)}_{S_m^k}(x,y) :=2\arcsin\frac{\|x -y\|}{2}\,.$$ 

	In some applications only the \emph{form}, i.e. shape without size filtered out is of interest. The corresponding \emph{size-and-shape space} is $S\Sigma_m^k := F_m^k/SO(m)$ with \emph{size-and-shape}
	$[x] := \{gx: g\in SO(m)\}\in S\Sigma_m^k $ of $x\in  F_m^k$.
	For a sample $x_1,\ldots, x_n \in F_m^k$, \emph{partial Procrustes sample means}, the size-and-shapes of $\mu^*\in F_m^k$ are then considered which minimize

	$$\min_{g_1,\ldots,g_n\in SO(m)} \sum_{j=1}^n\|\mu - g_j x_j\|$$
	over $\mu \in F_m^k$. Obviously, $\mu^* =\frac{1}{n}\sum_{j=1}^n g_j^*x_j$ with minimizers $g_1^*,\ldots,g_n^* \in SO(m)$. These means are ``partial'' because no equivalence w.r.t. scaling is considered. In contrast, in the definition of full Procrustes means above equivalence under the full group of similarities is considered. 

	\paragraph{2D full Procrustes means are extrinsic means.}
	For $m=2$, \cite{K84} observed that one may
	identify $F_2^k$ with $\mathbb C^{k-1}\setminus \{0\}$ such that every landmark column corresponds to a complex number. This means in particular that $z\in \mathbb C^{k-1}$ is a complex row-vector. With the Hermitian conjugate $a^* = (\overline{a_{kj}})$ of a complex matrix $a=(a_{jk})$ the pre-shape sphere $S_2^k$ is identified with $\{z\in \mathbb C^{k-1}: zz^*=1\}$ on which $SO(2)$ identified with $S^1=\{\lambda \in\mathbb C: |\lambda|=1\}$ acts by complex scalar multiplication. Then the well known Hopf-Fibration gives $\Sigma_2^k=S_2^k/S^1=\mathbb CP^{k-2}$, the complex projective $(k-2)$-dimensional space. Moreover, denoting by $M(k-1,k-1,\mathbb C)$ all complex $(k-1)\times  (k-1)$ matrices, the \emph{Veronese-Whitney embedding} is given by
	\begin{eqnarray*}
	 \frak{v}:\Sigma_2^k &\to& \{a \in M(k-1,k-1,\mathbb C): a^*=a\}
	,~~[z] ~\mapsto~ z^*z\,.
	\end{eqnarray*}
	If $Z\in \mathbb C^{k-1}$ is a random pre-shape, identifying  $\frak{v}(\Sigma_2^k)$ with $\Sigma_2^k$, the set of \emph{extrinsic means} (cf.  \cite{BP03}) of $[Z]$ is the set of shapes of the orthogonal projection of the usual expected value $\mathbb E(Z^*Z)$ to $\frak{v}(\Sigma^k_2)$. Employing complex linear algebra, the set of extrinsic means is easily identified as the 
	shapes of the eigenvectors to the largest eigenvalue of the \emph{complex integral of squares matrix} $\mathbb E(Z^*Z)$. 
	Since on the other hand
	$$\mathbb E\big(\delta(Z,w)^2\big) = 1 - w \mathbb E(Z^*Z) w^*\,.$$
	with the residual shape distance from (\ref{res_dist:def}), we have that that any eigenvector of  $\mathbb E(Z^*Z)$ to its largest eigenvalue is a pre-shape of a full Procrustes mean. 

	\begin{Th} The set of 2D full Procrustes means is the inverse image under the Veronese-Whitney embedding of the set of extrinsic means on  $\frak{v}(\Sigma_2^k)$. 
	\end{Th}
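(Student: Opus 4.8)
The plan is to show that the set of full Procrustes means and the $\frak{v}$-preimage of the set of extrinsic means are both governed by the same eigenvalue problem for the Hermitian matrix $A := \mathbb E(Z^*Z)$: each equals the set of shapes $[w]$ whose pre-shape $w$ spans an eigenline of $A$ to its largest eigenvalue. Once both sides are identified with this common set, the theorem reduces to the injectivity of the Veronese--Whitney embedding.

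First I would treat the Procrustes side. A shape $[w]$, represented by a pre-shape $w\in S_2^k$ (so $ww^*=1$), is a full Procrustes mean precisely when $w$ minimizes $\mathbb E\big(\delta(Z,w)^2\big)$ over the pre-shape sphere, with $\delta$ as in (\ref{res_dist:def}). By the identity $\mathbb E\big(\delta(Z,w)^2\big)=1-wAw^*$ recorded above, minimizing the left-hand side is equivalent to maximizing the Hermitian Rayleigh quotient $w\mapsto wAw^*$ subject to $ww^*=1$. By the Rayleigh--Ritz/Courant--Fischer characterization the constrained maximum equals the largest eigenvalue of $A$, and the maximizers are exactly the unit vectors of the corresponding eigenspace. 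Hence $[w]$ is a full Procrustes mean if and only if $w$ is a unit eigenvector of $A$ to its largest eigenvalue. This yields both inclusions simultaneously, including the converse to the forward implication stated just before the theorem, and it handles a higher-dimensional top eigenspace (non-unique means) uniformly.

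Next I would treat the extrinsic side together with the embedding. The extrinsic means are the shapes of the orthogonal projection of $A$ onto $\frak{v}(\Sigma_2^k)$ in the inner product on Hermitian matrices. Since $\|A-w^*w\|^2=\|A\|^2-2\,wAw^*+1$, where $\|w^*w\|^2=1$ is constant on the pre-shape sphere, this projection again reduces to maximizing $wAw^*$; thus the projection is attained exactly at the rank-one matrices $w^*w$ with $w$ a top eigenvector of $A$. To transfer this back along $\frak{v}$ I would verify that $\frak{v}$ is well defined and injective on $\Sigma_2^k$: it is constant on fibers because $(\lambda w)^*(\lambda w)=w^*w$ for $|\lambda|=1$, and $w^*w$ is the rank-one orthogonal projection onto the complex line $\mathbb C\,w^*$, so $v^*v=w^*w$ forces $[v]=[w]$. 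Consequently $\frak{v}^{-1}$ applied to the set of extrinsic means returns precisely the shapes $[w]$ with $w$ a top eigenvector of $A$, matching the Procrustes description.

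I expect the only delicate point to be bookkeeping rather than depth: making the two-sided variational statement precise so that \emph{every} Procrustes (respectively extrinsic) mean is captured, not merely one, in particular treating a degenerate largest eigenvalue, where both mean sets equal the full projective image of the top eigenspace; and keeping the sign constraint $\langle gx,y\rangle\ge 0$ from (\ref{res_dist:def}) consistent with the complex scalar action of $S^1\cong SO(2)$, which is exactly what the quoted identity $\mathbb E\big(\delta(Z,w)^2\big)=1-wAw^*$ already absorbs. With that identity and the injectivity of $\frak{v}$ in hand, equality of the two sets is immediate.
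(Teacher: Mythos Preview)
Your proposal is correct and follows essentially the same approach as the paper: the paper's argument, given in the text immediately preceding the theorem rather than in a separate proof, identifies both the full Procrustes means and the extrinsic means with the shapes of eigenvectors of $\mathbb E(Z^*Z)$ to its largest eigenvalue, using the displayed identity $\mathbb E\big(\delta(Z,w)^2\big)=1-w\,\mathbb E(Z^*Z)\,w^*$ for the Procrustes side and ``complex linear algebra'' for the extrinsic side. You have simply fleshed out the details the paper leaves implicit---the Rayleigh--Ritz characterization, the explicit expansion $\|A-w^*w\|^2=\|A\|^2-2\,wAw^*+1$, and the injectivity of $\frak{v}$---so there is no substantive difference in method.
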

 
	In most applications, the largest eigenvalue of $\mathbb E(Z^*Z)$ is simple, then the 2D full Procrustes mean is uniquely determined.

	A similar procedure gives extrinsic means using the \emph{Schoenberg embedding} for the related space of Kendall's \emph{reflection shapes} of arbitrary dimension, not discussed here, cf. \citet{B08}.  
	Procrustes means for three- and higher-dimensional configurations, however, cannot be modeled in this vein. 

\section{Asymptotics for Procrustes and Ziezold Means}\label{SLLN_CLT_PZ_mean:scn}

	In this section we will see that 3D and higher-dimensional Procrustes means as well as a mean introduced by \cite{Z94} are special cases of Fr\'echet $\rho$-means. In the following, \emph{smooth} means at least twice continuously differentiable. 

\subsection{Kendall's Higher-dimensional Shape Spaces}
	As we have seen above, $\Sigma_2^k$ is a manifold, namely  a complex projective space. Similarly, $S\Sigma_2^m$ can be given a manifold structure. For $m\geq 3$, however, $\Sigma_m^k$ and $S\Sigma_m^k$ cease to be manifolds, they only contain dense and open submanifolds 
	{\footnotesize $$\begin{array}{rclcrcl}
	(\Sigma_m^k)^*&:=&(S_m^k)^*/SO(m)&\mbox{with}&(S_m^k)^*&:=&\{x\in S_m^k: \rank(x)\geq m-1\}\\
	(S\Sigma_m^k)^*&:=&(F_m^k)^*/SO(m)&\mbox{with}&(F_m^k)^*&:=&\{x\in F_m^k: \rank(x)\geq m-1\}	
	  \end{array}\,$$}called the \emph{manifold parts} of \emph{regular} or equivalently \emph{non-degenerate} shapes, size-and-shapes, pre-shapes and configurations, respectively. A pre-shape or configuration $x$ or its shape or its size-and-shape $[x]$ is called \emph{strictly regular} if $\rank(x)=m$. In case of $\Sigma_m^k$, $m\geq 3$, approaching degenerate shapes, some sectional curvatures are unbound. A detailed discussion can be found in \cite{KBCL99} 
	as well as \cite{HHM07}.

	Note that the square of a distance may not be smooth at so called cut points. E.g the square of the spherical distance between $e^{it}$ and $e^{is}$ on the unit circle $S^1\subset \mathbb C$ is smooth except when $e^{i(t-s)} =-1$. In general on a Riemannian manifold, the \emph{cut locus} $C(p)$ of $p$ comprises all points $q$ such that the extension of a length  minimizing geodesic joining $p$ with $q$ is no longer minimizing beyond $q$. If $q\in C(p)$ or $p\in C(q)$ then $p$ and $q$ are \emph{cut points}. 



	\begin{LemDef}\label{all_are_metrics:rm}
	All of
	{\footnotesize $$\begin{array}{rclcl}
	d^{(i)}_{\Sigma_m^k}([x],[y]) &:=& \min_{g\in SO(m)}d^{(s)}_{S_m^k}(gx,y)\,, \\
	d^{(p)}_{\Sigma_m^k}([x],[y]) &:=& \sin d^{(i)}_{\Sigma_m^k}([x],[y])&=&\min_{{\footnotesize\begin{array}{l}g\in SO(m)\\\langle gx,y\rangle \geq 0\end{array}}}\sqrt{1-\langle gx,y\rangle^2} \,,\\
	d^{(z)}_{\Sigma_m^k}([x],[y]) &:=& 2\sin \frac{d^{(i)}_{\Sigma_m^k}([x],[y])}{2}&=&\min_{g\in SO(m)}\sqrt{2(1-\langle gx,y\rangle)} \,,\\
	d^{(i)}_{S\Sigma_m^k}([x],[y]) &:=& \min_{g\in SO(m)}d^{(e)}_{F_m^k}(gx,y)&=&\min_{g\in SO(m)}\sqrt{\|x\|^2+\|y\|^2-2\langle gx,y\rangle} 
	\end{array}$$}\noindent
	are metrics on the shape space and the size-and-shape space, respectively, and their squares are smooth on the respective manifold parts except at cut points. 
	\end{LemDef}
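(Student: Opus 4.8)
The plan is to treat the two assertions separately: first that each of the four functions satisfies the metric axioms, and then that each square is smooth on the manifold part away from the cut locus.

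\textbf{Metric axioms.} For $d^{(i)}_{\Sigma_m^k}$, $d^{(z)}_{\Sigma_m^k}$ and $d^{(i)}_{S\Sigma_m^k}$ I would argue that each is the canonical quotient of a genuine metric under the \emph{isometric} action of the \emph{compact} group $SO(m)$. Indeed, the displayed identities exhibit $d^{(z)}_{\Sigma_m^k}([x],[y]) = \min_{g\in SO(m)}\|gx-y\|$ and $d^{(i)}_{S\Sigma_m^k}([x],[y]) = \min_{g\in SO(m)}\|gx-y\|$, i.e. the quotients of the Euclidean distance $d^{(e)}_{F_m^k}$ restricted to $S_m^k$ resp. to $F_m^k$, while $d^{(i)}_{\Sigma_m^k}$ is the quotient of the spherical distance $d^{(s)}_{S_m^k}$. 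Since $SO(m)$ is compact, every orbit is compact, hence closed, the infimum is attained, and $[x]=[y]$ holds precisely when the minimum vanishes; symmetry follows from $g\mapsto g^{-1}$ together with isometry, and the triangle inequality from composing optimal alignments: if $g_1,g_2\in SO(m)$ realize $d([x],[y])$ and $d([y],[z])$, then $d([x],[z])\le d(g_2g_1x,z)\le d(g_2g_1x,g_2y)+d(g_2y,z)=d([x],[y])+d([y],[z])$, using that $g_2$ acts isometrically.

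\textbf{The Procrustes distance.} The only function not manifestly of this form is $d^{(p)}_{\Sigma_m^k}=\sin d^{(i)}_{\Sigma_m^k}$. Here I would first record that $d^{(i)}_{\Sigma_m^k}$ takes values in $[0,\pi/2]$: one may always choose $g$ with $\langle gx,y\rangle\ge 0$, so the optimal spherical distance equals $\arccos\big(\max_{g}\langle gx,y\rangle\big)\le\pi/2$. Now set $\tilde f(t):=\sin t$ for $t\in[0,\pi/2]$ and $\tilde f(t):=1$ for $t>\pi/2$; this is nondecreasing, concave and $C^1$ on $[0,\infty)$, with $\tilde f(0)=0$ and $\tilde f(t)>0$ for $t>0$, and $d^{(p)}_{\Sigma_m^k}=\tilde f\circ d^{(i)}_{\Sigma_m^k}$ because $d^{(i)}_{\Sigma_m^k}\le\pi/2$. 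Concavity together with $\tilde f(0)=0$ forces subadditivity, $\tilde f(a)+\tilde f(b)\ge\tilde f(a+b)$, whence $d^{(p)}([x],[z])=\tilde f\big(d^{(i)}([x],[z])\big)\le\tilde f\big(d^{(i)}([x],[y])+d^{(i)}([y],[z])\big)\le d^{(p)}([x],[y])+d^{(p)}([y],[z])$, the first step by monotonicity and the triangle inequality for $d^{(i)}$, the second by subadditivity. Positivity of $\tilde f$ off $0$ gives the identity of indiscernibles, and symmetry is inherited. The point of the capping is that the sum $d^{(i)}([x],[y])+d^{(i)}([y],[z])$ may leave $[0,\pi/2]$, where $\sin$ itself is no longer monotone.

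\textbf{Smoothness of the squares.} For this I would pass to the manifold parts, on which the projections $(S_m^k)^*\to(\Sigma_m^k)^*$ and $(F_m^k)^*\to(S\Sigma_m^k)^*$ are Riemannian submersions; the quotients thus carry Riemannian metrics whose geodesic distances are exactly $d^{(i)}_{\Sigma_m^k}$ and $d^{(i)}_{S\Sigma_m^k}$ (cf. \cite{KBCL99,HHM07}). On any Riemannian manifold the squared geodesic distance $q\mapsto d(p,q)^2$ is smooth on the complement of the cut locus $C(p)$, since there a unique minimizing geodesic exists, there are no conjugate points, and the exponential map is a local diffeomorphism; concretely, away from cut points the optimal rotation $g=g(x,y)\in SO(m)$ is unique and depends smoothly on $(x,y)$, so that $\big(d^{(i)}_{\Sigma_m^k}\big)^2=\big(d^{(s)}_{S_m^k}(g(x,y)x,y)\big)^2$ is a composition of smooth maps, and likewise for $S\Sigma_m^k$. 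This settles the two intrinsic squares. The remaining two transfer by reparametrization: $\big(d^{(p)}_{\Sigma_m^k}\big)^2=\sin^2 d^{(i)}_{\Sigma_m^k}=h_p\big((d^{(i)}_{\Sigma_m^k})^2\big)$ and $\big(d^{(z)}_{\Sigma_m^k}\big)^2=2\big(1-\cos d^{(i)}_{\Sigma_m^k}\big)=h_z\big((d^{(i)}_{\Sigma_m^k})^2\big)$, where $h_p(u)=\sin^2\sqrt u=u-u^2/3+\cdots$ and $h_z(u)=2(1-\cos\sqrt u)=u-u^2/12+\cdots$ are power series in $u$, hence smooth at $u=0$; smoothness off the cut locus is thereby inherited.

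I expect the genuine obstacle to lie entirely in the smoothness claim, not in the metric axioms. One must justify that the manifold parts really carry the asserted Riemannian submersion structure with geodesic distance $d^{(i)}$ even though they are \emph{incomplete} — the sectional curvatures are unbounded toward the degenerate stratum — and then control the cut locus, i.e.\ secure uniqueness of the optimal rotation and the absence of conjugate points, so that the classical off-cut-locus smoothness of $d^2$ may be invoked. By contrast the one algebraically delicate metric axiom, the triangle inequality for $d^{(p)}_{\Sigma_m^k}$, reduces to the elementary concavity of $\sin$ on $[0,\pi/2]$.
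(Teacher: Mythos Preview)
Your proof is correct and in fact more self-contained than the paper's. The paper establishes the smoothness of $(d^{(i)}_{\cdot})^2$ via the Riemannian quotient structure exactly as you do, but then handles the metric axioms differently: for $d^{(z)}_{\Sigma_m^k}$ it invokes ``convexity and monotonicity of $t\mapsto 2\sin(t/2)$'' (i.e.\ the concavity/subadditivity device you reserve for $d^{(p)}$), whereas you observe more directly that $d^{(z)}_{\Sigma_m^k}([x],[y])=\min_{g}\|gx-y\|$ is already the canonical quotient of the chordal metric and hence a metric by the general compact-group argument --- this is cleaner. For $d^{(p)}_{\Sigma_m^k}$ the paper simply cites \cite[p.~206]{KBCL99}, while your capped-sine argument supplies a self-contained proof of the triangle inequality; the capping is the right fix, since $d^{(i)}([x],[y])+d^{(i)}([y],[z])$ can exceed $\pi/2$. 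Finally, the paper does not spell out why smoothness of $(d^{(i)})^2$ transfers to $(d^{(p)})^2$ and $(d^{(z)})^2$; your observation that $\sin^2\sqrt{u}$ and $2(1-\cos\sqrt{u})$ are entire functions of $u$ closes that gap explicitly.
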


	\begin{proof} If the canonical quotient $Q=M/G$ of a smooth Riemannian manifold $M$ due to the action of a Lie group $G=SO(m)$ on $M$ is a Riemannian manifold, then the intrinsic distance on $Q$ is a smooth metric (cf. \citet[Chapter 4.1]{AM78}). This gives the smoothness of $(d^{(i)}_{\cdot})^2$ on the respective manifold parts which are dense in the shape space and the size-and-shape space, respectively, except at cut points. 
	Hence, the $d^{(i)}_{\cdot}$ extend to metrics on these. Moreover, $d^{(z)}_{\Sigma_m^k}$ is a metric due to  convexity and monotonicity of $t \to 2\sin (t/2)$ for $t\in [0,\pi]$; for $d^{(p)}_{\Sigma_m^k}$ we rely on \citet[p. 206]{KBCL99}.
	\end{proof}

	We say that $x$ \emph{is in optimal position to $y$} if $\max_{g\in SO(m)} \langle gx,y\rangle = \langle x,y\rangle$. In particular then $d^{(z)}_{\Sigma_m^k}([x],[y]) =\|x-y\|$ and $d^{(i)}_{S\Sigma_m^k}([x],[y]) =\|x-y\|$, respectively.

\subsection{Fr\'echet $\rho$-Means}\label{Frech_means:scn}
	The Central-Limit-Theorem (CLT) derived below relies on the ``$\delta$-method'' for smooth transformations. In consequence, we can only expect a CLT theorem to hold on the manifold part of the shape space and the size-and-shape space, respectively. For the following we assume that $(Q,d)$ is a metric space and $P$ is a $D$-dimensional smooth Riemannian manifold. For example, $Q=\Sigma_m^k$ and $P = (\Sigma_m^k)^*$. Suppose that $X, X_1,X_2,\ldots$ are i.i.d. random variables mapping from an abstract probability space $(\Omega,\cal A,\mathbb P)$ to $Q$ equipped with its Borel $\sigma$-field. For a random vector $Y\in \mathbb R^D$, $\mathbb E(Y)$ denotes the usual expectation, if defined. 

	\begin{Def}\label{Frechet_means:def} For a continuous function $\rho:Q\times P \to [0,\infty)$  define the \emph{set of  population Fr\'echet $\rho$-means of $X$ in $P$} by
	$$ E^{(\rho)}(X) = \argmin_{\mu\in P} \mathbb E\big(\rho(X,\mu)^2\big) 
	\,.$$
	For $\omega\in \Omega$ denote the \emph{set of sample Fr\'echet $\rho$-means} by
	$$ E^{(\rho)}_n(\omega) = \argmin_{\mu\in P} \sum_{j=1}^n \rho\big(X_j(\omega),\mu\big)^2\,.$$
	\end{Def}

	Since their original definition  by \cite{F48} for $P=Q$ 
	and $\rho=d$, such means have found much interest. \cite{Z77} extended the concept to $\rho$ being a quasi-metric only. 
	On Riemannian manifolds ($Q=P$) w.r.t. the Riemannian metric $\rho$, \cite{BP03,BP05} introduced the corresponding means as \emph{intrinsic means}, and, taking $\rho$ to be the metric of an ambient Euclidean space, as \emph{extrinsic means}.

	In consequence of Lemma and Def. \ref{all_are_metrics:rm} and the connection with the residual distance (\ref{res_dist:def}), 
	$d^{(p)}_{\Sigma_m^k}([x],[y])=\delta(x,y)$, we can at once identify Procrustes means. 

	\begin{Cor} The set of $\rho$-Fr\'echet sample means on the shape space and size-and-shape space, respectively, are the sets of 
	\begin{enumerate}
	\item[(i)] full Procrustes sample means for $\rho =  d^{(p)}_{\Sigma_m^k}$,  
	\item[(ii)] partial Procrustes sample means for $\rho =  d^{(i)}_{S\Sigma_m^k}$.
	\end{enumerate}
	\end{Cor}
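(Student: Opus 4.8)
The plan is to reduce the corollary to the equality of two optimization problems: for each choice of $\rho$ I would show that the Fr\'echet objective $\sum_{j=1}^n \rho([X_j],[\mu])^2$ coincides, as a function of the shape (resp.\ size-and-shape) $[\mu]$, with the corresponding Procrustes sum of squares, after which the two $\argmin$ sets agree because the minimizations run over the same domain. The one structural fact used throughout is that each summand is itself an infimum over $SO(m)$ and that these inner optimizations decouple across $j$: writing $\rho([x_j],[\mu])^2=\min_{g_j\in SO(m)} f_j(g_j,\mu)$ with $SO(m)$ compact and $f_j$ continuous, the minima are attained and
$$\sum_{j=1}^n \rho([x_j],[\mu])^2 = \min_{g_1,\ldots,g_n\in SO(m)}\sum_{j=1}^n f_j(g_j,\mu)\,.$$

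For part (ii) I would start from Lemma and Definition \ref{all_are_metrics:rm}, which gives $d^{(i)}_{S\Sigma_m^k}([x],[\mu])^2 = \min_{g\in SO(m)}\|gx-\mu\|^2$. Summing over $j$ and applying the decoupling above yields
$$\sum_{j=1}^n d^{(i)}_{S\Sigma_m^k}([X_j],[\mu])^2 = \min_{g_1,\ldots,g_n\in SO(m)}\sum_{j=1}^n\|g_jX_j-\mu\|^2\,,$$
which is exactly the partial Procrustes sum of squares. Since the right-hand side depends on $\mu\in F_m^k$ only through its size-and-shape $[\mu]$, minimizing the Fr\'echet functional over $S\Sigma_m^k$ coincides with minimizing the partial Procrustes objective over $\mu\in F_m^k$, so the two $\argmin$ sets are equal.

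For part (i) the extra ingredient is the identification $\delta(x,y)=d^{(p)}_{\Sigma_m^k}([x],[y])$ indicated before the corollary, which I would verify directly: for $x,y\in S_m^k$ and any $g\in SO(m)$ with $c:=\langle gx,y\rangle\geq 0$, using $\|gx\|=\|y\|=1$,
$$\|\,c\,gx-y\,\|^2 = c^2 - 2c\langle gx,y\rangle + 1 = 1-c^2\,,$$
so that the minimization over $g$ in (\ref{res_dist:def}) maximizes the alignment $c=\langle gx,y\rangle$ and gives $\delta(x,y)^2=\min_{g}(1-\langle gx,y\rangle^2)=d^{(p)}_{\Sigma_m^k}([x],[y])^2$. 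Consequently the full Procrustes sum of squares $\min_{\mu\in S_m^k}\sum_{j=1}^n\delta(X_j,\mu)^2$ equals $\min_{[\mu]}\sum_{j=1}^n d^{(p)}_{\Sigma_m^k}([X_j],[\mu])^2$, the Fr\'echet functional for $\rho=d^{(p)}_{\Sigma_m^k}$, and again the $\argmin$ sets coincide.

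The remaining work is bookkeeping rather than a genuine obstacle. I would record that $\delta$ and the quotient metrics are invariant under the $SO(m)$-action in both arguments, so that they descend to functions of the shapes and the minimizations over pre-shapes (resp.\ configurations) agree with minimizations over $\Sigma_m^k$ (resp.\ $S\Sigma_m^k$); that the sign constraint $\langle gx,y\rangle\geq 0$ enters $\delta$ and $d^{(p)}_{\Sigma_m^k}$ in the same way; and that the normalization to the pre-shape sphere $S_m^k$ made in the full-Procrustes setup is harmless. The only point needing a word of care is the degenerate case $\langle gx,\mu\rangle=0$ for all $g$, excluded in the derivation of the optimal scale $\lambda_j^*$; away from it the optimal scale is positive and the reduction to $\delta$ is exactly as above.
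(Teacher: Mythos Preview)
Your proposal is correct and follows essentially the same route as the paper. The paper states the corollary as an immediate consequence of Lemma and Definition \ref{all_are_metrics:rm} together with the identification $d^{(p)}_{\Sigma_m^k}([x],[y])=\delta(x,y)$, without spelling out a separate proof; you simply make explicit the decoupling of the inner $SO(m)$-minimizations and verify the equality $\delta=d^{(p)}_{\Sigma_m^k}$ by the one-line computation $\|c\,gx-y\|^2=1-c^2$, which is exactly the content the paper leaves implicit.
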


	For $m=2$, \cite{Z94} introduced mean shapes w.r.t. $\rho =  d^{(z)}_{\Sigma_m^k}$ which have been studied also by \cite{Le98}. The following first two definitions honor his memory. The other extend sample Procrustes means to the population case. 

	\begin{Def}
	 Call $d^{(z)}_{\Sigma_m^k}$ the \emph{Ziezold metric} on the shape space, and the Fr\'echet $d^{(z)}_{\Sigma_m^k}$-means the set of \emph{Ziezold means}. The Fr\'echet $d^{(p)}_{\Sigma_m^k}$-means are the set of \emph{full Procrustes means}, the Fr\'echet $d^{(i)}_{S\Sigma_m^k}$-means the set of \emph{partial Procrustes means}.
	\end{Def}
	Ziezold means and full Procrustes means, even though defined earlier, can be thought of as a generalization of extrinsic means and means for crude residuals (cf. \cite{MJ00}), respectively. Partial Procrustes means are extensions of intrinsic means to non-manifolds. 

	Let us now touch on the issues of existence and uniqueness for the above introduced means. Both are are rather simple issues for extrinsic means, since they are orthogonal projections of classical Euclidean means in an ambient space to the embedded manifold in question, this projection being well defined except for a set of Lebesgue measure zero (cf. \cite{BP03}). For non-extrinsic means as above, however, existence, namely that the means are assumed on the manifold part is a deeper issue tackled in \cite{H_meansmeans_10}. Intrinsic means are unique if the underlying random elements are sufficiently concentrated (cf. \cite{L01} for the elaborate proof). For higher-dimensional Ziezold and full Procrustes means, to the knowledge of the author, there are no uniqueness results available. 
%

	Since we are concerned with metrics, 
	the following is a consequence of the Strong Law by \cite{Z77}, cf. also \cite{BP03}.

	\begin{Th}\label{SLLN:th} Suppose that $X$ is a random pre-shape in $S_m^k$ or a random configuration in $F_m^k$ with unique Ziezold, full Procrustes or partial Procrustes mean $[\mu]$. 
	Then every measurable selection $[\mu_n(\omega)]$ from the sets of sample Ziezold, full Procrustes or partial Procrustes means, respectively, is a strongly consistent estimator of $[\mu]$. 
	\end{Th}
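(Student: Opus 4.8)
The plan is to deduce the statement directly from Ziezold's strong consistency theorem for Fr\'echet means in separable metric spaces (\cite{Z77}, see also \cite{BP03}), which applies essentially verbatim once we know that the three relevant functions $\rho$ are genuine metrics -- and this is exactly the content of Lemma and Definition \ref{all_are_metrics:rm}. First I would fix a common framework: write $Q$ for the ambient space ($\Sigma_m^k$ in the Ziezold and full Procrustes cases, $S\Sigma_m^k$ in the partial Procrustes case) and $\rho$ for the corresponding metric $d^{(z)}_{\Sigma_m^k}$, $d^{(p)}_{\Sigma_m^k}$ or $d^{(i)}_{S\Sigma_m^k}$. By Lemma and Definition \ref{all_are_metrics:rm} each $\rho$ is a continuous metric on the separable space $Q$, defined on all of $Q\times Q$ including the degenerate shapes, so that the Fr\'echet function $F(\mu)=\mathbb E\big(\rho(X,\mu)^2\big)$ and its empirical counterpart $F_n(\mu,\omega)=\frac1n\sum_{j=1}^n\rho\big(X_j(\omega),\mu\big)^2$ make sense on $Q$. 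In the pre-shape cases $\rho$ is bounded (by $2$, respectively $1$), so $F$ is finite everywhere; in the configuration case finiteness of $F$ at one, hence by the triangle inequality at every, $\mu$ follows from $\mathbb E(\|X\|^2)<\infty$, which I would record as the implicit integrability requirement.

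The heart of the matter is Ziezold's set-theoretic containment of limiting sample minimizers in the population minimizers: almost surely
$$\bigcap_{n=1}^\infty \overline{\bigcup_{k\geq n} E_k^{(\rho)}(\omega)}\ \subseteq\ E^{(\rho)}(X)\,.$$
This is the step I expect to be the main obstacle if one had to reproduce it from scratch. Its proof rests on the pointwise strong law $F_n(\mu,\cdot)\to F(\mu)$ a.s.\ for each fixed $\mu$, upgraded to a statement about minimizers by a continuity and localization argument, so that any accumulation point of a sequence of sample minimizers must itself minimize $F$. For $Q=\Sigma_m^k$, which is compact as a quotient of the sphere $S_m^k$ by the compact group $SO(m)$, this localization is automatic. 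For the non-compact $Q=S\Sigma_m^k$ one must additionally rule out escape to infinity: since $\rho(X_j,\mu)^2\geq(\|\mu\|-\|X_j\|)^2$, the empirical Fr\'echet function grows like $\|\mu\|^2$ as $\|\mu\|\to\infty$, forcing the sample minimizers into a fixed compact set eventually, almost surely. A second, smaller subtlety is that Definition \ref{Frechet_means:def} takes the argmin over the manifold part $P$ rather than over all of $Q$; this is harmless here, because the hypothesis places the population minimizer $[\mu]$ in $E^{(\rho)}(X)\subseteq P$ and the existence results for these means guarantee attainment on $P$, so minimizing over $P$ and over $\overline P=Q$ agree.

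Finally I would invoke the uniqueness assumption. Since $E^{(\rho)}(X)=\{[\mu]\}$ is a single point, the containment above forces every accumulation point of every measurable selection $[\mu_n(\omega)]$ from the sample mean sets $E_n^{(\rho)}(\omega)$ to equal $[\mu]$, almost surely. Because every such selection lies in a compact subset of $Q$ -- automatic for $\Sigma_m^k$, and for $S\Sigma_m^k$ by the eventual-boundedness argument above -- having $[\mu]$ as its unique accumulation point is equivalent to $\rho\big([\mu_n(\omega)],[\mu]\big)\to 0$, which is precisely strong consistency. The argument does not distinguish between the three metrics, so the Ziezold, full Procrustes and partial Procrustes cases are all established simultaneously.
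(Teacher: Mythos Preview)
Your proposal is correct and follows exactly the paper's approach: the paper's entire justification is the single sentence preceding the theorem, ``Since we are concerned with metrics, the following is a consequence of the Strong Law by \cite{Z77}, cf.\ also \cite{BP03}.'' You have simply unpacked this reference in more detail than the paper does, supplying the integrability, compactness and escape-to-infinity considerations that Ziezold's result needs but that the paper leaves implicit.
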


	For the following we need the condition that $[X]$ is bounded away from the cut locus of 
	the mean a.s. in the Ziezold or full/partial Procrustes 
	sense 
	\begin{eqnarray}\label{bdd_away:cond}
	\exists \epsilon >0 &\st& d([X],[\mu])<
	d(C([\mu]),[\mu])-\epsilon \mbox{ a.s.}
	\end{eqnarray}
	where $[\mu]$ and $d$ are corresponding  Ziezold, full Procrustes or partial Procrustes means and distances, respectively.

\subsection{The Central-Limit-Theorem}\label{CLT:scn}
	The Central-Limit-Theorem for Fr\'echet $\rho$-means requires additional setup. 

		\begin{Def}\label{CLT:def} Let $P$ be a $D$-dimensional manifold. We say that a $P$-valued estimator $\mu_n(\omega)$ of $\mu \in P$ satisfies a \emph{Central-Limit-Theorem} (CLT), if in any local chart $(\phi,U)$ near $\mu =  \phi^{-1}(0)$ there are a suitable $D\times D$ matrix $A_{\phi}$ and a Gaussian $D\times D$ matrix ${\cal G}_{\phi}$ with zero mean and semi-definite symmetric covariance matrix $\Sigma_{\phi}$ such that
	$$ \sqrt{n}A_{\phi}\big(\phi(\mu_n) - \phi(\mu)\big)~\to~{\cal G}_{\phi}$$
	in distribution as $n\to \infty$. 
	\end{Def}
	In most applications $A_{\phi}$ is non-singular, then in consequence of the ``$\delta$-method'', for any other chart $(\phi',U)$ near $\mu =  \phi'^{-1}(0)$ we have simply
		 $$A^{-1}_{\phi'}\Sigma_{\phi'} (A^{-1}_{\phi'})^T = J(\phi'\circ\phi^{-1})_0 A^{-1}_{\phi} \Sigma_{\phi}(A^{-1}_{\phi})^TJ(\phi'\circ\phi^{-1})_0^T\,$$
	where $J(\cdot)_0$ denotes the Jacobian matrix of first derivatives at the origin.

	In consequence of Theorem \ref{SLLN:th} and Theorem \ref{CLT:th} from the appendix (assertion (ii) follows from \cite{BP05}) we have the following. 
	\begin{Th}\label{CLT_PZ:th}
	Suppose that $X$ is a random pre-shape in $S_m^k$ or a random configuration in $F_m^k$, $[X]$ having a unique Ziezold, full Procrustes or partial Procrustes mean $[\mu]$ on the respective manifold part.
	Then every measurable selection $[\mu_n(\omega)]$ from the sets of sample Ziezold, full Procrustes or partial Procrustes means, respectively, satisfies a Central-Limit-Theorem if $[X]$ is bounded away from the cut locus 
	of the mean a.s. in the Ziezold or full/partial Procrustes sense (\ref{bdd_away:cond}) under the following additional condition:
	\begin{enumerate}
	\item[(i)] none in case of Ziezold or full Procrustes means, 
	\item[(ii)]  in case of partial Procrustes means, if the Euclidean second moment $\mathbb E(\|X\|^2)$ is finite.
	\end{enumerate}\noindent
	In a suitable chart $(\phi,U)$ the corresponding matrices from Definition \ref{CLT:def} are given by
	$$ A_{\phi} = \mathbb E(H\rho([X],[\mu])),~~\Sigma_{\phi} = \cov(\grad\rho([X],[\mu]))$$
	where $\rho$ denotes the distances $d_{\Sigma_m^k}^{(p)},d_{\Sigma_m^k}^{(z)}$ and $d_{S\Sigma_m^k}^{(i)}$, respectively. Moreover, $\grad$ and $H$ denote the gradient and Hessian of $x\mapsto \rho([X],[\phi^{-1}(x)])^2$, respectively. 
	\end{Th}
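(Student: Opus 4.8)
The plan is to deduce the statement from the general Central-Limit-Theorem \ref{CLT:th} in the appendix, verifying its hypotheses separately for each of the three squared distances $\rho^2$, and combining this with the strong consistency already guaranteed by Theorem \ref{SLLN:th}. Working in a local chart $(\phi,U)$ with $\mu=\phi^{-1}(0)$ on the manifold part $P$, Theorem \ref{SLLN:th} ensures that every measurable selection $[\mu_n]$ eventually lies in $U$ a.s., so that the first-order optimality condition for the sample objective $\frac1n\sum_j\rho([X_j],[\phi^{-1}(x)])^2$ holds at $x=\phi(\mu_n)$. This is the setup for a standard M-estimation / $\delta$-method expansion.

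The decisive analytic input is that, for almost every realization of $X$, the map $x\mapsto\rho([X],[\phi^{-1}(x)])^2$ is twice continuously differentiable on a single neighborhood of $0$. Here condition (\ref{bdd_away:cond}) is essential: since $[X]$ is a.s. bounded away from the cut locus $C([\mu])$ by a fixed $\epsilon>0$, Lemma and Definition \ref{all_are_metrics:rm} provides this $C^2$-smoothness uniformly in the realization, legitimizing both the Taylor expansion of the gradient equation and the interchange of differentiation with expectation. Expanding the sample estimating equation about $\phi(\mu)=0$ then gives
$$ \sqrt n\,A_{\phi,n}\big(\phi(\mu_n)-\phi(\mu)\big) = -\frac{1}{\sqrt n}\sum_{j=1}^n\grad\rho([X_j],[\mu]) + o_P(1)\,, $$
with $A_{\phi,n}$ the sample-averaged Hessian at an intermediate point.

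It remains to identify the two limits and check their validity. By the strong law $A_{\phi,n}\to A_\phi=\mathbb E(H\rho([X],[\mu]))$ a.s., and uniqueness of $[\mu]$ together with the second-order minimality condition forces $A_\phi$ to be positive definite, hence invertible --- this is where the uniqueness hypothesis enters. The summands $\grad\rho([X_j],[\mu])$ have mean zero, being the gradient of the population objective at its minimizer, so the multivariate CLT yields convergence of $\frac{1}{\sqrt n}\sum_j\grad\rho([X_j],[\mu])$ to a centered Gaussian with covariance $\Sigma_\phi=\cov(\grad\rho([X],[\mu]))$; since $\sqrt n\,(\phi(\mu_n)-\phi(\mu))$ is thereby $O_P(1)$ and $A_{\phi,n}\to A_\phi$, Slutsky's lemma replaces $A_{\phi,n}$ by $A_\phi$ and delivers Definition \ref{CLT:def} with the asserted $A_\phi$ and $\Sigma_\phi$. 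The integrability needed for these two limit laws distinguishes the cases: for the bounded Ziezold and full Procrustes distances on the pre-shape sphere the gradients and Hessians are bounded on the $\epsilon$-neighborhood, so no moment condition is required (case (i)); for the partial Procrustes distance $(d^{(i)}_{S\Sigma_m^k})^2=\|x\|^2+\|y\|^2-2\langle gx,y\rangle$ the derivatives are dominated by an affine function of $\|X\|^2$, so $\mathbb E(\|X\|^2)<\infty$ secures both the strong law for the Hessians and a finite gradient covariance (case (ii)).

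The step I expect to be the main obstacle is the uniform $C^2$-control together with an integrable envelope for the two derivatives valid simultaneously for all base points in the neighborhood: because $\rho$ is itself a minimum over $SO(m)$, its regularity in the base point hinges on the almost-sure uniqueness and smooth dependence of the optimal aligning rotation, which can degrade near degenerate shapes where, by \citet{KBCL99,HHM07}, sectional curvatures blow up. Condition (\ref{bdd_away:cond}) is precisely what rules this out, and establishing that it produces such a dominating envelope --- rather than the subsequent routine $\delta$-method bookkeeping --- is where the reduction to the appendix's Theorem \ref{CLT:th} carries the real weight.
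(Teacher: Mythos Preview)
Your approach is essentially that of the paper: the proof there consists of one sentence invoking Theorem~\ref{SLLN:th} for strong consistency and Theorem~\ref{CLT:th} in the appendix for the CLT, with the observation that case~(ii) is already in \cite{BP05}; you have simply unpacked the content of Theorem~\ref{CLT:th} inline and verified the integrability condition~(\ref{mean_integrability:cond}) case by case, which is exactly what is needed.

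One point to correct: you assert that uniqueness of $[\mu]$ together with second-order minimality forces $A_\phi$ to be positive definite. This is not true in general (think of $x\mapsto x^4$), and the paper does not claim it --- Definition~\ref{CLT:def} is deliberately phrased with $A_\phi$ on the left of $\sqrt{n}(\phi(\mu_n)-\phi(\mu))$, and the paper only remarks afterwards that ``in most applications $A_\phi$ is non-singular''. Your Slutsky step, and the $O_P(1)$ bound you extract from it, tacitly rely on invertibility; the cleaner route (which is what the appendix proof does) is to keep the expansion in the form $(A_\phi+o(1))\sqrt{n}\phi(\mu_n)\to\mathcal G_\phi$ and conclude directly in the form of Definition~\ref{CLT:def}, deferring invertibility to the applications.
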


	Numerical simulations show a rather good finite sample accuracy of the CLT, cf. Figure \ref{geod_pert_1_sample_test:fig}.
	\begin{figure}[h!]
	\begin{minipage}{0.4\textwidth}
	\includegraphics[angle=-90,width=1\textwidth]{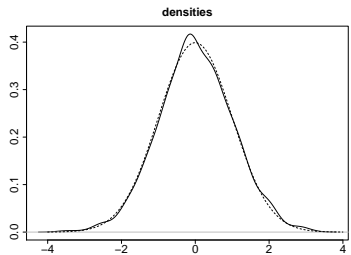}
	\end{minipage}
	\begin{minipage}{0.05\textwidth}\hfill \end{minipage}
	\begin{minipage}{0.55\textwidth}
	\caption{\it Depicting the proximity to the standard normal density (dashed line) of the normalized density (solid line) of the non-zero coordinate of full Procrustes means of samples of size $n=10$ from the geodesic perturbation model (\ref{geod_perturbation_model:eq}) in $\Sigma_3^4$ with $\epsilon_i=0$ and $s_i \sim N(0,0.1^2)$ with the two-dimensional mean $\mu=\diag(1,-1,0)/\sqrt{2}{\cal H}^T$ and  $\nu=\diag(1,1,1)/\sqrt{3}{\cal H}^T$.\label{geod_pert_1_sample_test:fig}}
	\end{minipage}
	\end{figure}
	\begin{Rm}
	 For $m\geq 3$, computing Ziezold means (for an algorithm cf. \cite{Z94}) is computationally less costly than computing full Procrustes means (corresponding 
	algorithms 
	are discussed in \citet[Chapter 5.3]{DM98}): for full Procrustes means in every iteration step every single optimal positioned datum needs additionally to be projected to the tangent space. The simulations reported in Section \ref{isotropic_error:scn} and the Bootstrap simulations in Section \ref{Conical_boles:scn} 
	give similar results but are approximately $15 \,\%$ faster when using Ziezold means instead of full Procrustes means. 
	\end{Rm}

	\begin{Rm}
	 \cite{Gr05} proves that the above algorithms converge under rather broad conditions and supplies error estimates. 
	\end{Rm}

\section{Perturbation Inconsistency for Kendall Shapes}\label{pert_mods:scn}
		For brevity in this section unless otherwise specified, Procrustes means refer to full Procrustes means.

	\cite{G91} proposed to model a sample of landmark configuration matrices $y_i$ ($i=1,\ldots,n$) with a \emph{perturbation model} that since then, has been highly popular in the community:
	\begin{eqnarray}\label{perturbation_model:eq}
	 y_i &=& \lambda_i g_i(\mu + \epsilon_i) + t_i\,.
	\end{eqnarray}
	The $\lambda_i >0$ convey scaling, the $g_i$ are rotation matrices and the $t_i$ stand for translations. Obviously, these three sets of nuisance parameters keep the \emph{shape} invariant, of interest is only the deterministic \emph{perturbation mean} $\mu$ and the random perturbations $\epsilon_i$ with zero expectation. If the size is also of interest then set $\lambda_i=1$ keeping only  the form invariant under rotation and translation. Under the assumption of isotropic Gaussian errors $\epsilon_i$, 
	for estimation and inference on the population perturbation means $\mu_i$ and error covariances $\epsilon_i$, \cite{G91} proposed to use \emph{sample Procrustes means} obtained from  GPA, cf. Section \ref{GPA:scn}. In the sequel, the property that (for arbitrary errors) sample Procrustes means converge to the shape of the mean of an underlying perturbation model has been coined as the ``consistency of Procrustes means''.

\subsection{Isotropic 3D Error}\label{isotropic_error:scn}

	In order to validate Goodall's proposal, \cite{KM97} studied the analog of the perturbation model (\ref{perturbation_model:eq}) on the pre-shape sphere 
	and showed that for 2D configurations with isotropic $\epsilon_i$, the Procrustes population mean from (\ref{perturbation_model:eq}) is identical with the shape of the perturbation mean.
	 \cite{Le98} extended these results and showed that under slightly relaxed conditions for 2D configurations, intrinsic, Ziezold and Procrustes means all agree with the shape of $\mu$. 


	For 3D shapes the above arguments are no longer valid because the shape-fibres in $S_m^k$ are not spanned by geodesics in general (see \citet[Example 5.1]{HHM07}), hence equality of Procrustes means with the shape of a perturbation model cannot be expected, even for very small isotropic error. 

	In order to assess the practical impact of this effect, we measure the distance between the shape of the perturbation mean and its corresponding Procrustes mean. In view of Theorem \ref{CLT_PZ:th}, this can be done by determining the distance between the shape of the perturbation mean and a corresponding sampled Procrustes mean while confidence or equivalently its accuracy can be estimated by distances of correspondingly sampled sample Procrustes  means to their sample Procrustes mean. The following considerations detail this setup.

	Consider a random $X = Y{\cal H}/\|Y{\cal H}\| \in S_m^k$ with $Y$ from a perturbation model (\ref{perturbation_model:eq}) such that the following means are unique. Assume $\mu \in S_m^k$ and denote by 
	\begin{description}
	\item[$\nu$] the pre-shape of the Procrustes population mean 
	in optimal position to $\mu$; 
	\item[$\hat{\nu}^{(n)}$] the random pre-shape in optimal position to $\nu$ of the Procrustes sample mean 
	of $X_1,\ldots,X_n$ i.i.d. as $X$; 
	\item[$\hat{\nu}^{(n)}_j$] for $j=1,\ldots,N$ i.i.d. realizations of $\hat{\nu}^{(n)}$, 
	\item[$\hat{\nu}^{(n,N)}$] a realization of the Procrustes sample mean of $\hat{\nu}^{(n)}_1,\ldots,\hat{\nu}^{(n)}_N$ in optimal position to $\nu$. 
	\end{description}
	Then in consequence of Theorem \ref{CLT_PZ:th},
	\begin{eqnarray*}
	\hat{\sigma}^{(n,N)}&:=&
	 \sqrt{\frac{1}{N}\sum_{j=1}^Nd^{(p)}_{\Sigma_m^k}([\hat{\nu}^{(n)}_j], [\hat{\nu}^{(n,N)}])^2} 
	~\approx~ \sqrt{\mathbb E\big(d^{(p)}_{\Sigma_m^k}([\hat{\nu}^{(n)}],[\nu])^2\big)}
	\end{eqnarray*}
	gives an approximation of the confidence into or accuracy of the measurement of $\nu$ by $\hat{\nu}^{(n)}$. 
	On the other hand, we have the approximation
	$$\hat{d}^{(n,N)}:=\sqrt{\frac{1}{N}\sum_{j=1}^Nd^{(p)}_{\Sigma_m^k}([\nu_j^{(n)}],[\mu])^2}\approx  \sqrt{\mathbb E\big(d^{(p)}_{\Sigma_m^k}(\nu^{(n)},\mu)^2\big)}\,.$$
	The goodness of both approximations depends on $N$.
	Moreover for fixed $N$, $\hat{\sigma}^{(n,N)}$ can be made arbitrarily small by choosing $n$ sufficiently large. If alongside, $\hat{d}^{(n,N)}$ also becomes equally small, this can be taken as strong evidence that the shape of the perturbation model agrees with the corresponding Procrustes mean. Otherwise, we have strong evidence, that the two disagree. Table \ref{pert_mod_3d:tab} reports values of  $\hat{d}^{(n,N)}$ and $\hat{\sigma}^{(n,N)}$ for typical simulations of (\ref{perturbation_model:eq}) for $m=3,k=4$ with isotropic i.i.d Gaussian error $\epsilon$ of variance $\sigma^2$. The following remark summarizes the results of these and further simulations.

	\begin{table}\centering
 	\begin{tabular}{c|c|cc}$\mu {\cal H}$&$\sigma$&$\hat{d}^{(n,N)}$&$\hat{\sigma}^{(n,N)}$\\\hline
	\multirow{3}{*}{$\frac{1}{\sqrt{3}}\left(\begin{array}{ccc}1&0&0\\0&1&0\\0&0&1\end{array}\right)$}
	&$0.5$&$ 0.011$&$ 0.0097$\\
 	 &$0.1$&$0.0023 $&$ 0.0021$\\
	&$0.01 $&$0.00022 $&$0.00020$\\ \hline
	\multirow{3}{*}{$\frac{1}{\sqrt{1.1}}\left(\begin{array}{ccc}1&0&0\\0&0.3&0\\0&0&0.1\end{array}\right)$}
	&$0.5$&$0.30$&$ 0.030$\\
	&$0.1$&$0.016$&$ 0.0022$\\
	&$0.01$&$ 0.00026$&$ 0.00020$\\ \hline
	\multirow{3}{*}{$\frac{1}{\sqrt{1.001}}\left(\begin{array}{ccc}1&0&0\\0&0.01&0\\0&0&0\end{array}\right)$}&$0.1 $&$ 0.12$&$ 0.082$\\
	&$0.01$&$0.0055$&$ 0.00044$\\
	&$0.001$&$5.0e-05$&$ 2.0-05$
 	\end{tabular}
	\caption{\it Estimated distance $\hat{d}^{(n,N)}$ between shape of the mean $\mu$ of the perturbation model (\ref{perturbation_model:eq}) and corresponding Procrustes mean with standard error $\hat{\sigma}^{(n,N)}$ for $n=10,000$ and $N=10$. The error follows independently ${\cal N}(0,\sigma^2)$ in every component. For convenience the Helmertized pre-shape $\mu {\cal H}$ is reported displaying proximity to degeneracy.\label{pert_mod_3d:tab}}
	\end{table}

	\begin{Rm}
	 In approximation the perturbation model (\ref{perturbation_model:eq}) with isotropic errors is compatible with the geometry of $\Sigma_3^k$ if the perturbation mean is far from degenerate. For nearly degenerate perturbation means, however, the perturbation model may be incompatible even for small isotropic errors. Using Ziezold means instead of Procrustes means gives similar results. 
	\end{Rm}

\subsection{Non-Isotropic 2D Error}\label{Kents_reserv:scn}

	\paragraph{Global effects.}
	Recall  from Section \ref{GPA:scn}, modeling 
	$\Sigma_2^k$ by a complex projective space. Since the eigenvectors of the complex integral of squares matrix
	may vary discontinuously under continuous matrix variation this led \cite{KM97} to the conclusion that Procrustes means can be inconsistent where in fact they are statistically consistent but possibly discontinuous. 

	\begin{figure}[h!]
	\begin{minipage}{0.6\textwidth}
	 \includegraphics[angle=-90,width=1\textwidth]{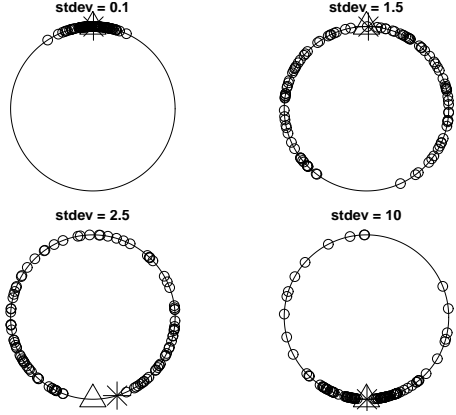}
	\end{minipage}
	\begin{minipage}{0.35\textwidth}
	\caption{\it Great circle in $\Sigma_2^3$ spanned by the shapes of the perturbation (\ref{Kent_pert:mod}). The little circles denote $100$ sampled shapes, the star their Procrustes sample mean and the triangle the respective Procrustes population mean. The respective headers give the standard deviation of the Gaussian $\epsilon$. 
	\label{Kents_Reservation:fig}}
	\end{minipage}
	\end{figure}

	In view of the perturbation model (\ref{perturbation_model:eq}) consider in the spirit of \citet[p. 285]{KM97} the complex configuration $\mu = (\sqrt{2},0,1/\sqrt{2})\in \mathbb C^3$ additively perturbed by the non-isotropic random $\epsilon = (0,0,\sqrt{6}\eta t /2)\in \mathbb C^3$, $t\sim {\cal N}(0,1)$, $\eta>0$, modeling linear configurations with two fixed endpoints and a random landmark varying in the middle. The Kendall pre-shape of 
	\begin{eqnarray}\label{Kent_pert:mod}
	\mu + \epsilon
	\end{eqnarray} 
	is given by $z = \frac{1}{\sqrt{1+\eta^2t^2}}(1, \eta t)$ with the complex integral of squares matrix
	$$ \left(\begin{array}{cc} \mathbb E\left(\frac{1}{1+\eta^2t^2}\right)&0\\
	0& \eta^2 \mathbb E\left(\frac{t^2}{1+\eta^2t^2}\right)\end{array}\right)\,.$$
	For small error intensity $\eta$, the shape of $\mu$ gives the Procrustes population mean. For higher error intensity, however, the Procrustes population mean indeed changes abruptly to the shape of the configuration $\nu=(\sqrt{2},\sqrt{2},-2\sqrt{2})$. 

	In order to visualize the situation in the shape space recall the Hopf fibration $$S_2^3 \to \Sigma_2^3 = \mathbb CP^1: (\alpha_1,\alpha_2)\mapsto \left(\re(\alpha_1\overline{\alpha_2}),\im(\alpha_1\overline{\alpha_2}), \frac{|\alpha_1|^2-|\alpha_2|^2}{2} \right)$$ 
	mapping from a three-sphere to a two-sphere. The counter-intuitive behavior of discontinuity  of the Procrustes mean is visualized in Figure \ref{Kents_Reservation:fig}. As is clearly visible, the discontinuity of the Procrustes mean in the error's standard deviation $\sqrt{6}\eta/2$ is due to the fact, that the corresponding shapes move along a common great circle in $\Sigma_2^3$, from clustering at the top (the shape of $\mu$) to clustering at the bottom (the shape of $\nu$). The discontinuity observed by \cite{KM97} thus reflects a global effect of an ``incompatibility'' of the perturbation model with the geometry of the shape space.

\paragraph{Local effects.}

	As above, 
	\citet[Figure 3 on p. 598]{Lele93} considers an example of a distribution of planar but now quadrangular configurations along a straight line in the configuration space $F_2^4$. For simplicity of the argument consider $\mu = (i,1,-i,-1),~~\epsilon = (0,i,0,-i)$ and the sample $z_1=\mu +\epsilon,~~z_2=\mu-\epsilon\,$, $\epsilon >0$,
	with mean $\mu$ in Euclidean $F_2^4$. One verifies immediately that $z_1$ and $z_2$ are not in optimal position, rather $\lambda = \frac{1}{\sqrt{5}}(1+2i)$ puts $z_1$ into optimal position   $\lambda z_1$ to $z_2$ w.r.t. to the action of $SO(2)$. Then the form of $(\lambda z_1 + z_2)/2 $ is the partial Procrustes mean, which is different from the form of $\mu$. The alleged ``inconsistency'' of the partial Procrustes mean is now a local effect of an ``incompatibility'' of the perturbation model with the canonical geometry of the size-and-shape space $S\Sigma_2^4$.

\section{Asymptotics for Diffusion Tensors}\label{midly:scn}

	In diffusion tensor neuro-imaging (for a short introduction into this young field, e.g. \cite{VilZhKL06}), the dominating eigenvector of a symmetric semi-positive definite (SPD) matrix from the space 
	$P(m) = \{0\neq a\in M(m,m): a^T=a\geq 0\}$ 
	exhibits the dominating direction of molecular displacement due to a flow in tissue fibres of interest. The statistical and non-statistical literature to the task of reconstructing SPD matrices which are called \emph{diffusion tensors} (DTs) in this context is vast. Recently, matching techniques involving shape analysis (e.g. \cite{CMMWY06}) have gained momentum. 

	A non-standard statistical approach to reconstruct a mean DT from an observed sample of neighboring DTs based on Procrustes methods has been proposed by \cite{DKZL09}. One motivation comes from perturbation models, which, as it turns out below are ``inconsistent'' with Procrustes means. Coming as a surprise, however, practical image reconstructions based on this method in particular for nearly degenerate DTs, have produced results of convincingly good quality, cf. \cite{DKZ09}. 
	In the following, this approach is first extended to \emph{mildly rank-deficient DTs} and secondly, perturbation inconsistency is assessed.
%
%
	Even though in most applications to imaging, the flow is observed in 2D or 3D, i.e. $m=2,3$, the following 
	applies to arbitrary $m\in \mathbb N$. 

	Diffusion tensors observed in the ``real world'' fall into the sub-space 
	$P^{+}(m)=\{a\in P(m): a>0\}$ 
	of symmetric strictly positive definite matrices. This space --  even though much more complicated than complex projective space -- has been well studied (as the ``universal symmetric space of non-compact type with non-positive non-constant sectional curvatures'' e.g. \citet[Chapter XII]{L99}). Via the well known \emph{Cholesky factorization} 
	$$ {\rm chol} : P^{+}(m) \to UT^{+}(m)\,,$$ 
	this space can equivalently be modelled by the group of $m$-dimensional upper triangular matrices $UT^{+}(m)$ with positive diagonal. 
	Traditionally diffusion tensors have been modelled within these spaces, statistical analysis has either been carried out by using the extrinsic distance inherited from $M(m,m)$ (e.g. \cite{PajBass03}) or the intrinsic distance due to the aforementioned structure (cf. \cite{fletch4}). Modeling flow in ideal micro-fibres, obviously leads to rank-deficient diffusion tensors, which, however, are not contained in these manifolds. Unless one utilizes a Euclidean embedding e.g. in the space of symmetric matrices, a new structure has to be found. 

\subsection{A CLT for Mildly Rank-deficient Diffusion Tensors}
	Here, an embedding $P(m)\hookrightarrow \Sigma_{m}^{m+1}$ is proposed, that maps the space $$P^*(m) = \{a\in P(m): \rank(a)\geq m-1\}$$
	 of \emph{mildly rank-deficient diffusion tensors} into the manifold part $(\Sigma_{m}^{m+1})^*$. This embedding is inspired by recent work of \cite{DKZ09} who, in effect, model $P^{+}(m)$ as a subspace of $\Sigma_{m}^{m+1}$. More subtly one could model as well using Kendall's \emph{reflection size-and-shape space}. In fact, modeling within size-and-shape space is equivalent to embedding reflection size-and-shape space in size-and-shape space.

	To this end consider the following canonical domain for upper triangular matrices
	{\footnotesize\begin{eqnarray*}
	 \lefteqn{UT(m) }\\&=& \Big\{a\in M(m,m): \mbox{ there is }1\leq i_0 \leq m\mbox{ such that for } \\ 
	&&\left.\begin{array}{rl}
	i=1:& \mbox{there is }m\geq j_{1} \geq 1 \mbox{ with }a_{ij_1} >0\mbox{ and } a_{il} = 0 \mbox{ for all }1\leq l < j_1,\\
	2\leq i \leq i_0: &
	\mbox{there is }m\geq j_{i} \geq j_{i-1}+1\mbox{ with } a_{il} = 0 \mbox{ for all }1\leq l<j_i,\mbox{ and }\\& \hspace{3.4cm}a_{ij_i} > 0\mbox{ in case } i < m,\\
	i_0<i:  
	&a_{ij}=0\mbox{ for all } j=1,\ldots,m\end{array}
	\right\}\,,
	\end{eqnarray*}}the 
	sphere $SUT(m) = \{a\in UT(m): \|a\|=1\}$, $UT^\geq(m) = \{a\in UT(m): a_{mm}\geq 0\}$ and  $SUT^\geq(m) = SUT(m)\cap UT^\geq(m)$. Moreover, with a bijective extension of the Cholesky factorization, the corresponding canonical projections from Section \ref{GPA:scn}:
	$\pi$ and $s\pi$, $s(x) = \|x\|$ and bijections $\phi, \phi_s$, consider the following diagram of mappings:

	\vspace{-0.2cm}
	{\footnotesize\begin{eqnarray}\label{wild_P(m)_structure:def}
	\left.\begin{array}{rclclcl}
	 P(m) \\
	\updownarrow {\rm chol}\\
	UT^\geq(m)&\hookrightarrow& SO(m)\times UT(m)&\stackrel{(g,a) \mapsto ga}{\rightarrow}&F_m^{m+1}&\stackrel{(\pi,s)}{\to}&\Sigma_m^{m+1}\times \mathbb R_+\\
	&&&&\downarrow s\pi&& \updownarrow \phi \\
	&&&&S\Sigma_{m}^{m+1}&& SUT^\geq(m)\\
	&&&&\updownarrow \phi_s\\
	&&&&UT^\geq(m)
	\end{array}\right\}
	\end{eqnarray}}In
	particular, (\ref{wild_P(m)_structure:def}) gives rise to the following mappings
	$$\tau_s:P(m) \to S\Sigma_m^{m+1}\mbox{ and } \tau:P(m) \to \Sigma_m^{m+1}\,.$$


	\begin{Th}\label{diff_tens_kend_ss:thm} The mappings $\tau_s$ and $\tau$ are well defined, open  and continuous. Moreover
	 \begin{enumerate}\item[(i)] $\tau_s$ restricted to the space of mildly rank-deficient diffusion tensors $P^*(m)$ is an injective mapping into the manifold part $(S\Sigma_m^{m+1})^*$ of Kendall's size-and-shape space. Restricted to the space of full rank diffusion tensors it is a diffeomorphism onto an open subset. 
	\item[(ii)] $\tau$ restricted to the space of mildly rank-deficient diffusion tensors $P^*(m)$ maps into the manifold part $(\Sigma_m^{m+1})^*$. Restricted to the space of full rank diffusion tensors it is a submersion of codimension $1$ onto an open subset. 
	\end{enumerate}
	\end{Th}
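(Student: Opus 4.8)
The plan is to work throughout with the Gram relation $a={\rm chol}(a)^T{\rm chol}(a)$ that is built into the diagram (\ref{wild_P(m)_structure:def}), under which $\tau_s(a)=[{\rm chol}(a)]$ in $S\Sigma_m^{m+1}$ and $\tau(a)=\pi\big([{\rm chol}(a)]\big)$ in $\Sigma_m^{m+1}$. Well-definedness is then immediate, since every arrow in (\ref{wild_P(m)_structure:def}) is a genuine map. Injectivity of $\tau_s$ on $P^*(m)$ I would read off the Gram relation alone: if $\tau_s(a)=\tau_s(b)$ then ${\rm chol}(b)=g\,{\rm chol}(a)$ for some $g\in SO(m)$, whence $b={\rm chol}(b)^T{\rm chol}(b)={\rm chol}(a)^Tg^Tg\,{\rm chol}(a)={\rm chol}(a)^T{\rm chol}(a)=a$; note this argument is insensitive to the rank. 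Since $\rank{\rm chol}(a)=\rank(a)$, a tensor $a\in P^*(m)$, i.e.\ with $\rank a\geq m-1$, has ${\rm chol}(a)\in(F_m^{m+1})^*$, so that $\tau_s(a)\in(S\Sigma_m^{m+1})^*$ and $\tau(a)\in(\Sigma_m^{m+1})^*$; this gives the manifold-part assertions of (i) and (ii).

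For continuity I would pass to the size-normalized slice and use compactness. The Gram map $x\mapsto x^Tx$ is continuous and $SO(m)$-invariant, hence descends to a continuous map $\bar\Phi\colon\{[x]:x\in SUT^\geq(m)\}\to\{a\in P(m):\tr a=1\}$, $[x]\mapsto x^Tx$. By the Gram relation and bijectivity of ${\rm chol}$ this $\bar\Phi$ is a bijection; its source is the continuous image of the compact set $SUT^\geq(m)$, hence compact, and its target is compact Hausdorff (being metric by Lemma and Def.~\ref{all_are_metrics:rm}). A continuous bijection from a compact space to a Hausdorff space is a homeomorphism, so $\tau$ restricted to $\{\tr a=1\}$ is the continuous inverse $\bar\Phi^{-1}$. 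Feeding back the global size $s(x)=\|x\|$ and the resulting splitting of the regular part into shape and size, continuity of both $\tau$ and $\tau_s$ on all of $P(m)$ follows; the same compactness argument shows moreover that $\tau$ and $\tau_s$ are homeomorphisms onto their images, hence open onto image.

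The differentiable assertions in (i) and (ii) I would obtain on the full-rank stratum $P^+(m)$, where ${\rm chol}$ is the classical diffeomorphism onto $UT^+(m)$, reducing everything to the behaviour of $s\pi$ on the slice $UT^+(m)\subset(F_m^{m+1})^*$. The crucial linear-algebra point is that this slice is transverse to the $SO(m)$-orbits: if an upper-triangular $\dot R$ lies in the orbit direction, $\dot R=AR$ with $A$ antisymmetric, then $A=\dot R R^{-1}$ is simultaneously upper triangular and antisymmetric, so $A=0$; thus $s\pi|_{UT^+(m)}$ is an immersion. It is also injective, since two upper-triangular matrices with positive diagonal related by some $g\in SO(m)$ force $g=R'R^{-1}$ to be upper triangular and orthogonal, hence diagonal with positive entries $\pm1$, i.e.\ $g=I$. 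An injective immersion between manifolds of equal dimension $m(m+1)/2=\dim UT^+(m)=\dim S\Sigma_m^{m+1}$ is a local diffeomorphism and therefore a diffeomorphism onto an open subset, which is (i). For (ii) I write $\tau=p\circ\tau_s$ with $p\colon S\Sigma_m^{m+1}\to\Sigma_m^{m+1}$ the size quotient; the dimension count $\dim S\Sigma_m^{m+1}-\dim\Sigma_m^{m+1}=1$ shows $p$ is a submersion with one-dimensional kernel, so $\tau|_{P^+(m)}$ is a submersion of codimension $1$ onto an open set. As submersions are open maps, this already secures openness of $\tau_s$ and $\tau$ on the dense full-rank stratum.

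The main obstacle, and the only genuinely delicate point, is the rank-$(m-1)$ stratum. There the pointwise Cholesky factor is honestly discontinuous — the position of the leading nonzero entry jumps as the last pivot vanishes — so continuity of $\tau_s$ cannot be inferred from ${\rm chol}$ directly; the compactness argument above is precisely what bypasses this, because the jump occurs inside a single $SO(m)$-orbit and becomes invisible after projection. In the same regime a rank-$(m-1)$ configuration is fixed by the mirror involution $[x]\mapsto[Dx]$, since one may choose $g\in SO(m)$ agreeing with the reflection $D$ on the $(m-1)$-dimensional column span; consequently $P^*(m)$ is identified with the quotient of $(S\Sigma_m^{m+1})^*$ by this involution along its fixed hypersurface, and $\tau_s$ realises the section picking the $UT^\geq(m)$ orientation. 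Verifying that this section stays continuous across, and behaves compatibly with, that fixed hypersurface is where the work lies; the compactness/embedding argument together with the rank-independent injectivity already proved are exactly what carry the statement through this stratum.
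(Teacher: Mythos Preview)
Your overall strategy is different from the paper's and in several respects more detailed. The paper establishes well-definedness of the extended Cholesky factorization by a direct uniqueness argument (reducing to the classical decomposition on a maximal regular minor), and then obtains continuity and openness by invoking the simplicial-complex description of $UT(m)$ and $SUT(m)$ from \citet[Chapter~2]{KBCL99}, checking that the boundary identifications there agree with the natural topology of $P(m)$. It declares the differentiable statements in (i) and (ii) ``straightforward'' without further detail. Your route --- Gram-relation injectivity, a compactness/continuous-bijection argument for continuity, and the explicit transversality computation $A=\dot R R^{-1}$ upper-triangular and antisymmetric $\Rightarrow A=0$ for the immersion/submersion claims --- is self-contained and avoids the KBCL99 machinery, at the price of the point below.

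There is, however, a genuine gap in your compactness step: $SUT^\geq(m)$ is \emph{not} compact in the Euclidean subspace topology, because the definition of $UT(m)$ involves strict inequalities on the leading nonzero entries. For $m=2$, the sequence $\frac{1}{\sqrt{1+\epsilon^2}}\left(\begin{smallmatrix}\epsilon&0\\0&1\end{smallmatrix}\right)\in SUT^\geq(2)$ converges in $S_2^3$ to $\left(\begin{smallmatrix}0&0\\0&1\end{smallmatrix}\right)\notin UT(2)$. So ``continuous image of the compact set $SUT^\geq(m)$'' is not a valid justification that the source of $\bar\Phi$ is compact. The repair is easy and fits your argument: replace $SUT^\geq(m)$ by the genuinely compact set
$K:=\{x\in S_m^{m+1}: x_{ij}=0\mbox{ for }i>j,\ x_{ii}\geq 0\mbox{ for all }i\}$,
and observe that $\pi(K)=\{[x]:x\in SUT^\geq(m)\}$. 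The inclusion $\supseteq$ is trivial; for $\subseteq$, given $y\in K$ apply your extended Cholesky to $y^Ty$ to produce $u\in SUT^\geq(m)$ with $u^Tu=y^Ty$, and note that the orthogonal $g$ with $y=gu$ has $\det g=\det y/\det u>0$ in the full-rank case (both determinants are products of nonnegative diagonal entries, nonzero by assumption) while in the rank-deficient case $g$ can be chosen in $SO(m)$. With this correction your compactness argument goes through, and the remainder of your proposal --- the size splitting to pass from the trace-one slice to all of $P(m)$, and the transversality/dimension-count arguments for the smooth structure on $P^+(m)$ --- is correct.
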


	For the proof of Theorem \ref{diff_tens_kend_ss:thm} and 
	diagram (\ref{wild_P(m)_structure:def}) we refer to the appendix.

	\begin{Rm}
	In consequence, inference for mildly deficient-rank diffusion tensors can be carried out via $\tau_s$ or $\tau$ utilizing the CLT: Theorem \ref{CLT_PZ:th}. Moreover, mean shapes can be pulled back under $(\tau_s)^{-1}$ to obtain \emph{mean diffusion tensors} if 
	they 
	stay away from the region corresponding to $a_{mm} <0$ in $UT(m)$. 
	\end{Rm}

\subsection{Perturbation Inconsistency for Diffusion Tensors}


	A typical perturbation model considered by \cite{DKZ09} 
	is 
	\begin{eqnarray}\label{cov_matrix_pert:mod}x_i &=& (\mu + \epsilon_i)^T(\mu +\epsilon_i)\end{eqnarray}
	with perturbation mean $\mu \in UT(m)$ and error $\epsilon_i$ following a Gaussian  distribution ${\cal N}(0,\sigma^2)$ 
	independently in every component or in every upper diagonal component and zero in every strictly lower diagonal component. In order to relate the quality of the embedding $\tau_s$ from (\ref{wild_P(m)_structure:def}) to other structures for the space of diffusion tensors (e.g. that of the universal symmetric space), \cite{DKZ09} compare the partial Procrustes distance between $\tau_s(\mu^T\mu)$ and partial Procrustes means from samples of (\ref{cov_matrix_pert:mod}). 

	For illustration we report in Table \ref{pert_mod_tens_3d:tab} a simulation in analogy to Section \ref{isotropic_error:scn} (using partial Procrustes means instead of full Procrustes means, 
	cf. Table \ref{pert_mod_3d:tab}) for three typical diffusion tensors for 
	 upper triangular isotropic errors. The situation is similar for isotropic errors. The values of $\hat{d}^{(n,N)}$ for $\sigma=0.1$ in our Table \ref{pert_mod_tens_3d:tab} correspond to the ``RMSE($d_S$) of $\hat{\Sigma}_S$'' reported in the blocks labelled ``II'' in ``Table 2'' (corresponding to our second block) and ``Table 3'' (corresponding to our third block) of \cite{DKZ09}. Here, however, we used $n=10,000$ and identified these numbers as a measure only for perturbation inconsistency by additionally computing the standard error.
	We remark the consequence.

		\begin{table}[h!]\centering
 	\begin{tabular}{c|c|cc}$\mu$&$\sigma$&$\hat{d}^{(n,N)}$&$\hat{\sigma}^{(n,N)}$\\\hline
	\multirow{3}{*}{$\left(\begin{array}{ccc}1&0&0\\0&1&0\\0&0&1\end{array}\right)$}
	&$0.1$&$0.12$&$ 0.0017$\\
 	 &$0.01$&$0.012$&$ 0.00017$\\
	&$0.001 $&$0.0012$&$1.6e-05$\\ \hline
	\multirow{3}{*}{$\left(\begin{array}{ccc}1&0&0\\0&0.3&0\\0&0&0.1\end{array}\right)$}
	&$0.1$&$0.023$&$0.0020$\\
	&$0.01$&$0.00026$&$0.00023$\\
	&$0.001$&$2.4e-05$&$ 2.5e-05$\\ \hline
	\multirow{3}{*}{$\left(\begin{array}{ccc}1&0&0\\0&0.01&0\\0&0&0\end{array}\right)$}
	&$0.1 $&$ 0.14$&$ 0.0019$\\
	&$0.01$&$0.0085$&$ 0.00021$\\
	&$0.001$&$0.00081$&$ 2.2-05$
 	\end{tabular}
	\caption{\it Estimated distance $\hat{d}^{(n,N)}$ between size-and-shape of $\mu$ from the perturbation model (\ref{cov_matrix_pert:mod}) and corresponding partial Procrustes mean with standard error $\hat{\sigma}^{(n,N)}$ for $n=10,000$ and $N=10$. The error follows independently ${\cal N}(0,\sigma^2)$ in every upper triangular  component.\label{pert_mod_tens_3d:tab}}
	\end{table}

	\begin{Rm}\label{DTI-inconsistency:rm}
	Suppose that $X$ follows a perturbation model (\ref{cov_matrix_pert:mod}). For  upper triangular perturbation mean $\mu$ and error $\epsilon$ this is an anisotropic perturbation model for $S\Sigma_m^{m+1}$. For $\epsilon$ independent Gaussian in every component, $\mathbb E\big({\rm chol}(X)\big)\neq {\rm chol}(\mu)$ in general (cf. Lemma \ref{chol:lem} in the appendix). 
	 Simulations corroborate inconsistency 
	with 
	the geometry of $S\Sigma_3^k$. 
	Surprisingly it seems that incompatibility is not increasing, near degeneracy. This observation may be taken as an explanation for successful modeling of nearly rank deficient diffusion tensors via Kendall's size-and-shape spaces and certainly deserves further research.
	\end{Rm}




\section{Assessing Tree Bole Cylindricity 
	}\label{Conical_boles:scn}



	We conclude with an application from forest biometry.
	As basic descriptors for the shape of tree boles,  
	\emph{taper curves} 
	relate height above ground level with the area of a cross section at that height. 
	The shape of a tree stem is thus described by a 
	two-dimensional curve. Empirical curves can be directly analyzed with methods of shape analysis, e.g. \cite{Kr02} or, sophisticated models based on biological and elastomechanical context can be sought for, e.g. \cite{CS94} or  \cite{GaSlobMats98}. For a discussion of current taper curve models cf. \cite{LiWeis10}. 
	While taper curves, following Cavalieri's principle, model tree stems by \emph{circular frusta}, recently bole ellipticality has been studied more closely, e.g. \cite{SH98} or \cite{KoiHir06}. In particular, \cite{RBWE07} model 3D logs by \emph{elliptical frusta} and discuss the impact of their findings on commercial logging: proximity to circular frusta increases the volume produced and reduces the number of turns to reposition logs in sawing machines and thus the total sawing time.

	\begin{figure}[htb!]
	 \includegraphics[width=1\textwidth]{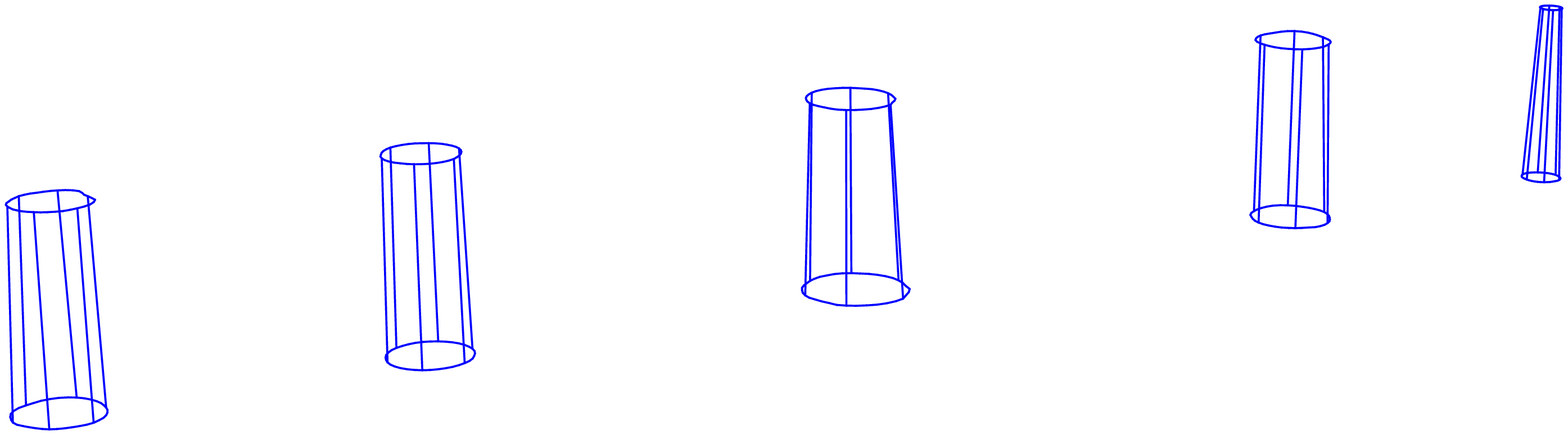}

	\vspace{-0.5cm}
	 \includegraphics[width=1\textwidth]{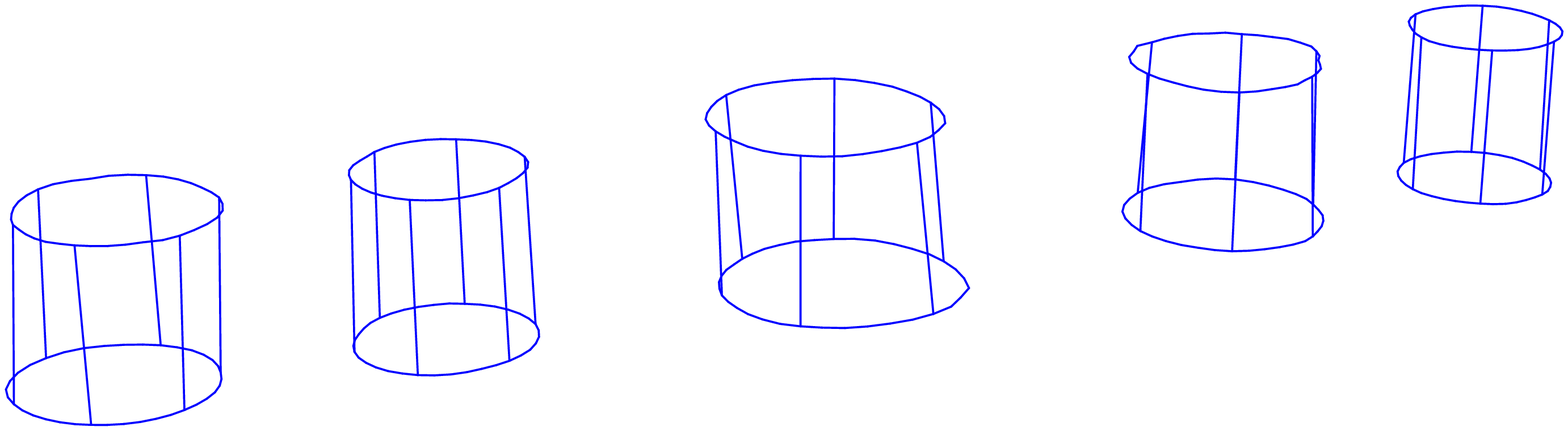}

	\vspace{-0.5cm}
	 \includegraphics[width=1\textwidth]{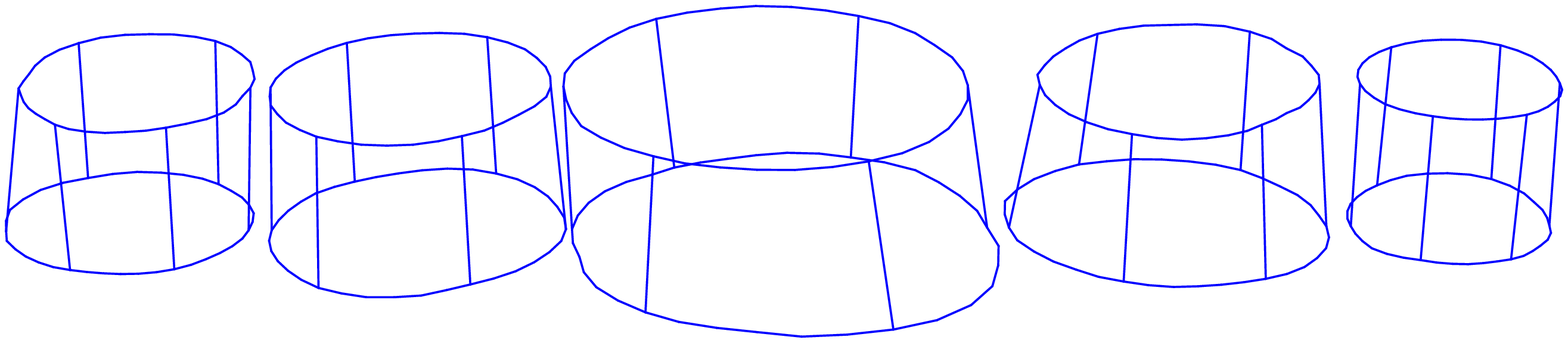}

	\vspace{-0.5cm}
	\caption{\it Distorted view (the center frustum in the bottom row has a diameter of approx. $50$ cm) along 36 angles of 1 m butt logs of five Douglas firs at ages 8 years (top row), 30 years (middle row) and 62 years (bottom row).\label{hex-cones:fig}}
	\end{figure}
\subsection{Data Description}
	Usually, tree stems are divided into three different parts: a short neiloid bottom part with strong tapering connecting with the root system, the main bole with little tapering -- which is of prime commercial interest -- and a conical top (e.g. \cite{LiWeis10}). Typically a \emph{butt log} that is a frustum taken above \emph{breast height} (1.3 m) is used to assess bole quality. For our application, we use 1 m butt logs from five Douglas fir trees typical for the inside of a small experimental stand in the Netherlands as detailed by \cite{gaslob01}. At about the age of 10 to 15 years, tree crowns met; subsequently with almost no thinning, competition for light increased strongly. The entire ring structure of bottom and top disk has been elaborately reconstructed along 36 equally spaced angles allowing to reconstruct the butt logs for every age beginning from 8 years to 62 years as displayed in Figure \ref{hex-cones:fig} for early, intermediate and ultimate age.
\subsection{Elliptical-Like Frusta: A Totally Geodesic Subspace}
	In order assess the deviation of a 1 m frustum from a cylinder (of unknown dimensions), we compute the shape distance to the space of all cylinders. To this end, and in order to compare with a model building on elliptical frusta, we introduce a suitable subspace of configurations and shapes.

	For given  ${\bf a}^T,{\bf b}^T,{\bf 1}^T=(1,\ldots,1)^T \in \mathbb R^{\kappa}$ with
	\begin{eqnarray}\label{parallel_cone_section:cond}
	 \langle {\bf a},{\bf b}\rangle =0 =\langle {\bf a},{\bf 1}\rangle = \langle {\bf b},{\bf 1}\rangle\,
	\end{eqnarray}
	and $r,\alpha,t>0$ let
	\begin{eqnarray*}
	w_{\alpha,\beta,r,t} &:= &\left(\begin{array}{cc}  r\,{\bf a}&r t \,{\bf a}\\
       r\alpha\,{\bf b}&r \beta t\, {\bf b}\\
	0\, {\bf 1}&\,{\bf 1}
	\end{array}\right)\, 
	\end{eqnarray*}
	be the configuration of an \emph{elliptical-like} unit height frustum. In case of ${\bf a} = \big(\cos(2\pi/\kappa), \ldots, \cos(2\pi\kappa/\kappa)\big)$ and ${\bf b} = \big(\sin(2\pi/\kappa), \ldots, \sin(2\pi\kappa/\kappa)\big)$ we have an \emph{elliptical} frustum of mean radius $r$, tapering $t$ with bottom half axes of length $r,r\alpha$ and top half axes of length $tr,tr\beta$. A \emph{straight frustum} has $t=1$, a \emph{circular frustum} has $\alpha = 1=\beta$ and a \emph{cylinder} is a straight circular frustum.  On the grounds of the findings of \cite{CS94}, that for a large middle part of the bole, there is little shape variation when moving upward, a possible torsion between top and bottom ellipse is neglected in this model.
	Denote by
	$$ PS_{{\bf a},{\bf b}} := \left\{ \frac{w}{\|w\|} : w \in P_{{\bf a},{\bf b}} \right\}\subset S_3^{2\kappa}$$
	the pre-shapes of all \emph{elliptical-like frusta} determined by ${\bf a},{\bf b}\in \mathbb R^\kappa$ satisfying (\ref{parallel_cone_section:cond}). Here  
	$$ P_{{\bf a},{\bf b}} := \{ w_{\alpha,\beta,r,t}{\cal H}: \alpha,\beta,r,t>0\}\subset F_3^{2\kappa}$$
	and  ${\cal H}$ denotes the Helmert sub-matrix from (\ref{Helmert_matrix:def}) in Section \ref{GPA:scn}. 

	Recall that a submanifold $P\subset M$ is \emph{totally geodesic} if for any two points in $P$ any minimal geodesic segment joining the two in $M$ is contained in $P$.

	\begin{Th}\label{parallel_cone_scn_geodesic:thm} Consider ${\bf a}^T,{\bf b}^T\in \mathbb R^{\kappa}$ satisfying (\ref{parallel_cone_section:cond}). Then the shapes of all elliptical-like frusta form 
	a totally geodesic submanifold of $(\Sigma_3^{2\kappa})^*$ with horizontal lift  $PS_{{\bf a},{\bf b}}$ to $S_3^{2\kappa}$.
	\end{Th}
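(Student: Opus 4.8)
The plan is to realize $PS_{\mathbf a,\mathbf b}$ as an open piece of a \emph{great subsphere} of the pre-shape sphere $S_3^{2\kappa}$ that is everywhere \emph{horizontal} for the Riemannian submersion $\pi:(S_3^{2\kappa})^*\to(\Sigma_3^{2\kappa})^*$ underlying $d^{(i)}_{\Sigma_3^{2\kappa}}$ (Lemma and Definition \ref{all_are_metrics:rm}), and then to invoke the standard fact that a horizontal great circle projects to a geodesic. First I would pass to the five vectors in $\mathbb R^{2\kappa}$ obtained by splitting $\mathbf a,\mathbf b$ over the bottom and top landmark blocks, namely $\mathbf a^{(1)}=(\mathbf a,0)$, $\mathbf a^{(2)}=(0,\mathbf a)$, $\mathbf b^{(1)}=(\mathbf b,0)$, $\mathbf b^{(2)}=(0,\mathbf b)$, together with the centred height pattern $\mathbf c=\tfrac12(-\mathbf 1,\mathbf 1)$. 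A direct computation using \emph{only} the orthogonality conditions (\ref{parallel_cone_section:cond}) shows that these five vectors are mutually orthogonal and orthogonal to $\mathbf 1_{2\kappa}$; in particular, after right multiplication by the Helmert matrix ${\cal H}$ from (\ref{Helmert_matrix:def}) (which annihilates the $\mathbf 1$-component, thereby accounting for the non-centred height row $(0,\mathbf 1)$) their pairwise inner products are unchanged. Writing $V\subset F_3^{2\kappa}$ for the $5$-dimensional linear subspace of Helmertized configurations whose first row lies in $\operatorname{span}(\mathbf a^{(1)},\mathbf a^{(2)}){\cal H}$, second row in $\operatorname{span}(\mathbf b^{(1)},\mathbf b^{(2)}){\cal H}$ and third row in $\operatorname{span}(\mathbf c){\cal H}$, I would check that $PS_{\mathbf a,\mathbf b}$ is exactly the set of points of $S_3^{2\kappa}\cap V$ all five of whose block coefficients are positive, hence an open subset of the $4$-dimensional great subsphere $S_3^{2\kappa}\cap V$ (the positivity encoding $r,rt,r\alpha,r\beta t>0$).

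The geometric heart is horizontality. The vertical space at a pre-shape $x$ is $\{Ax:A\in\mathfrak{so}(3)\}$, so a tangent $v$ is horizontal iff $\operatorname{tr}(Axv^T)=0$ for every skew-symmetric $A$, i.e.\ iff $xv^T$ is symmetric. Here both $x$ and any tangent $v\in T_x(S_3^{2\kappa}\cap V)$ lie in $V$; because the five generating vectors are mutually orthogonal, the off-diagonal entries of $xv^T$ — which are inner products of a row in the $\mathbf a$-block with a row in the $\mathbf b$-block, or of either with the $\mathbf c$-row — all vanish, so $xv^T$ is in fact diagonal and a fortiori symmetric. Hence every tangent vector of $S_3^{2\kappa}\cap V$ is horizontal. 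Since $V$ is linear, a geodesic of $S_3^{2\kappa}$ issuing from $x\in S_3^{2\kappa}\cap V$ with velocity $v\in V\cap x^\perp$ is the great circle in $\operatorname{span}(x,v)\subset V$ and therefore remains in $S_3^{2\kappa}\cap V$; by the same diagonality argument its velocity stays in $V$ and remains horizontal all along. Thus these great circles are \emph{horizontal} geodesics of the pre-shape sphere, and horizontal geodesics project under $\pi$ to geodesics of the shape space.

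It remains to assemble this. I would verify that $PS_{\mathbf a,\mathbf b}$ lies in the manifold part: with $r,\alpha,\beta,t>0$ the three rows are nonzero vectors in mutually orthogonal subspaces, hence linearly independent, so every such pre-shape is strictly regular ($\operatorname{rank}=3$) and lands in $(\Sigma_3^{2\kappa})^*$, safely inside the principal-bundle part where $\pi$ is a Riemannian submersion. The same sign bookkeeping shows that if $gx$ and $x$ both lie in $PS_{\mathbf a,\mathbf b}$ for some $g\in SO(3)$, then $g$ must be a diagonal sign matrix preserving all four positive block coefficients and the positive height, forcing $g=I$; thus $\pi$ is injective on $PS_{\mathbf a,\mathbf b}$ and the latter is genuinely the horizontal lift of its image. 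Combining the previous paragraph with the observation that any geodesic of $(\Sigma_3^{2\kappa})^*$ tangent to $\pi(PS_{\mathbf a,\mathbf b})$ lifts horizontally to a great circle inside $S_3^{2\kappa}\cap V$ and projects back, I conclude that $\pi(PS_{\mathbf a,\mathbf b})$ is a totally geodesic submanifold of $(\Sigma_3^{2\kappa})^*$ with horizontal lift $PS_{\mathbf a,\mathbf b}$.

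The main obstacle I expect is the horizontality step — specifically, keeping the Helmert centring honest. The height row $(0,\mathbf 1)$ is the only generator not orthogonal to $\mathbf 1_{2\kappa}$, and one must replace it by its centred version $\mathbf c$ (which ${\cal H}$ produces automatically) before the mutual orthogonality, and hence the diagonality of $xv^T$, can be read off from (\ref{parallel_cone_section:cond}). A secondary, more cosmetic point is the global reading of \emph{totally geodesic}: since $PS_{\mathbf a,\mathbf b}$ is only the positive-coefficient open part of the complete great subsphere $S_3^{2\kappa}\cap V$, the precise statement is the vanishing of the second fundamental form (equivalently, geodesics tangent to the image staying in it), with the complete great subsphere $\pi(S_3^{2\kappa}\cap V)$ serving as the ambient complete totally geodesic manifold into which the frustum shapes embed as an open submanifold.
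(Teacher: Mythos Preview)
Your proposal is correct and follows essentially the same route as the paper: both rest on the observation that for any $x,y$ in the linear span $V$ of the five block vectors the product $xy^T$ is diagonal, which the paper phrases as ``all elements of $PS_{\mathbf a,\mathbf b}$ are mutually in unique optimal position'' (citing KBCL99, p.~114) and then uses to conclude that straight line segments in $P_{\mathbf a,\mathbf b}$ project to horizontal great circles (citing HHM07). Your version unpacks these external references by verifying the horizontality criterion ``$xv^T$ symmetric'' directly and is more explicit about the Helmert centring of the height row, strict regularity, and injectivity of $\pi$ on $PS_{\mathbf a,\mathbf b}$, but the underlying computation is identical.
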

	The proof of Theorem \ref{parallel_cone_scn_geodesic:thm} 
	is found in the appendix. We have at once:
	\begin{Cor}\label{ideal_circular_cylinders:col}  
	The shapes of cylinders form a segment on a geodesic in $(\Sigma_3^{2\kappa})^*$. 
	\end{Cor}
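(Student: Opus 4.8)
The plan is to deduce the corollary directly from Theorem~\ref{parallel_cone_scn_geodesic:thm} by exhibiting the cylinders as a one-parameter subfamily of the elliptical-like frusta whose pre-shapes trace out a great-circle arc lying inside the horizontal lift $PS_{{\bf a},{\bf b}}$. Recall that a cylinder is a straight circular frustum, i.e. it is $w_{\alpha,\beta,r,t}$ with $\alpha=\beta=1$, $t=1$ and ${\bf a},{\bf b}$ the cosine/sine vectors, so that its only remaining free parameter is the aspect ratio $r>0$. Hence the pre-shapes of all cylinders form the curve
$$ r\ \longmapsto\ p_r := \frac{w_{1,1,r,1}{\cal H}}{\|w_{1,1,r,1}{\cal H}\|}\ \in\ PS_{{\bf a},{\bf b}}\subset S_3^{2\kappa}, \qquad r>0, $$
a subset of the horizontal lift furnished by the theorem.

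First I would observe that $w_{1,1,r,1}$ is affine in $r$: writing $w_{1,1,r,1}=r\,U+V$, where $U$ carries $({\bf a},{\bf a})$ and $({\bf b},{\bf b})$ in its first two rows and $V$ carries $(0,{\bf 1})$ in its third row (all other rows zero), right-multiplication by ${\cal H}$ gives $w_{1,1,r,1}{\cal H}=r\,u+v$ with $u:=U{\cal H}$ and $v:=V{\cal H}$. Since $U$ and $V$ have disjoint nonzero rows and neither $u$ nor $v$ vanishes (using ${\bf a},{\bf b}\neq 0$ and that the centred height vector $(-\tfrac12{\bf 1},\tfrac12{\bf 1})$ is nonzero), the vectors $u,v$ are linearly independent and span a genuine $2$-plane $\Pi\subset M(3,2\kappa-1)$ through the origin. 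Thus every $w_{1,1,r,1}{\cal H}$ lies on the affine line $\{ru+v\}$, and each normalized pre-shape $p_r$ lies on $S_3^{2\kappa}\cap\Pi$, which is a great circle of the pre-shape sphere.

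Next I would check that as $r$ runs over $(0,\infty)$ the points $p_r$ sweep out a connected sub-arc of this great circle, hence a geodesic segment of $S_3^{2\kappa}$: the direction of $ru+v$ turns monotonically from that of $v$ (as $r\to 0^+$) to that of $u$ (as $r\to\infty$). A short computation using the orthogonality relations (\ref{parallel_cone_section:cond}) in fact yields $\langle u,v\rangle=\tr(U\,{\cal H}{\cal H}^T\,V^T)=0$, so the endpoints $v/\|v\|$ and $u/\|u\|$ sit at spherical distance $\pi/2$ and the arc is exactly a quarter great circle; this refinement is convenient but not needed for the geodesic claim.

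Finally I would invoke the Riemannian-submersion principle together with Theorem~\ref{parallel_cone_scn_geodesic:thm}: the arc $\{p_r\}$ is a geodesic of $S_3^{2\kappa}$ lying inside $PS_{{\bf a},{\bf b}}$, and since $PS_{{\bf a},{\bf b}}$ is a horizontal lift, every tangent vector of the arc is horizontal; a horizontal geodesic projects under $S_3^{2\kappa}\to\Sigma_3^{2\kappa}$ to a geodesic, so the shapes of cylinders form a geodesic segment in $(\Sigma_3^{2\kappa})^*$. The main point, and the only place real work is needed beyond the theorem, is the identification in Step two of the pre-shape curve with a straight line in configuration space: a curve contained in a totally geodesic submanifold need not itself be a geodesic, and it is precisely the affine (hence great-circle) structure that upgrades ``contained in a totally geodesic submanifold'' to ``is a geodesic segment.''
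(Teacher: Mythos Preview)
Your argument is correct and is essentially the paper's own (implicit) argument spelled out in full: the paper derives the corollary ``at once'' from Theorem~\ref{parallel_cone_scn_geodesic:thm}, whose proof already records that straight line segments in $P_{{\bf a},{\bf b}}$ project to horizontal great-circle segments and hence to geodesics in $(\Sigma_3^{2\kappa})^*$; you have simply made explicit that the cylinders $r\mapsto w_{1,1,r,1}$ constitute exactly such an affine ray. Your closing methodological remark (that containment in a totally geodesic submanifold does not by itself force a curve to be geodesic, and that the affine structure is what does the work) is well taken, and your quarter-circle computation is a pleasant extra not present in the paper; note, incidentally, that $\langle u,v\rangle=0$ already follows from $U$ and $V$ having disjoint nonzero rows, so the orthogonality relations~(\ref{parallel_cone_section:cond}) are not actually needed at that step.
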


\subsection{Data Analysis}

	The distance of the shape of a frustum to the geodesic spanned by the shapes of cylinders is computed via a method proposed in \cite{HHM07}. Since for every age considered there are only five frusta, a $95 \,\% $ confidence band for the distance of the full Procrustes mean to the geodesic of cylinders has been computed by $200$ Bootstrap resamples. As clearly visible in Figure \ref{conf_dist_ideal_circ_geod:fig}, young frusta until the age of approx. 15 years tend to toward cylinders. At later ages they tend away again. The change point at approx. 15 years can be explained by the fact that at this time 
	competition for light began.
 
	Note that 
	uniform growth naturally decreases tapering and ellipticality. In order to compare the initial growth to uniform growth, three curves of elliptical frusta 
	starting with $t$ and $\alpha=\beta$ in the range of corresponding estimated butt log values as well as
	with size growth identical to the mean size growth are depicted in the left display of Figure \ref{conf_dist_ideal_circ_geod:fig} ($95 \,\%$ of estimated tapering coefficients $t$ observed are between  $0.864 $ and $0.986$; a crude estimate for $\alpha$ and $\beta$ gives $95 \,\%$ of observed values between $1.02$ and  $1.13$ (after rotating such that $\alpha,\beta\geq 1$)).
	This comparison yields that  observed initial growth towards cylinders appears even stronger than uniform.

	In order to assess the growth of the older logs, compare it to the growth of elliptical frusta keeping tapering and ellipticality constant while letting size grow with the mean size of the original data. Three of such curves are depicted in the right display of Figure \ref{conf_dist_ideal_circ_geod:fig}. The observed movement away from the geodesics of cylinders appears rather similar. 

	\begin{figure}[h!]
	 \includegraphics[angle=-90,width=0.5\textwidth]{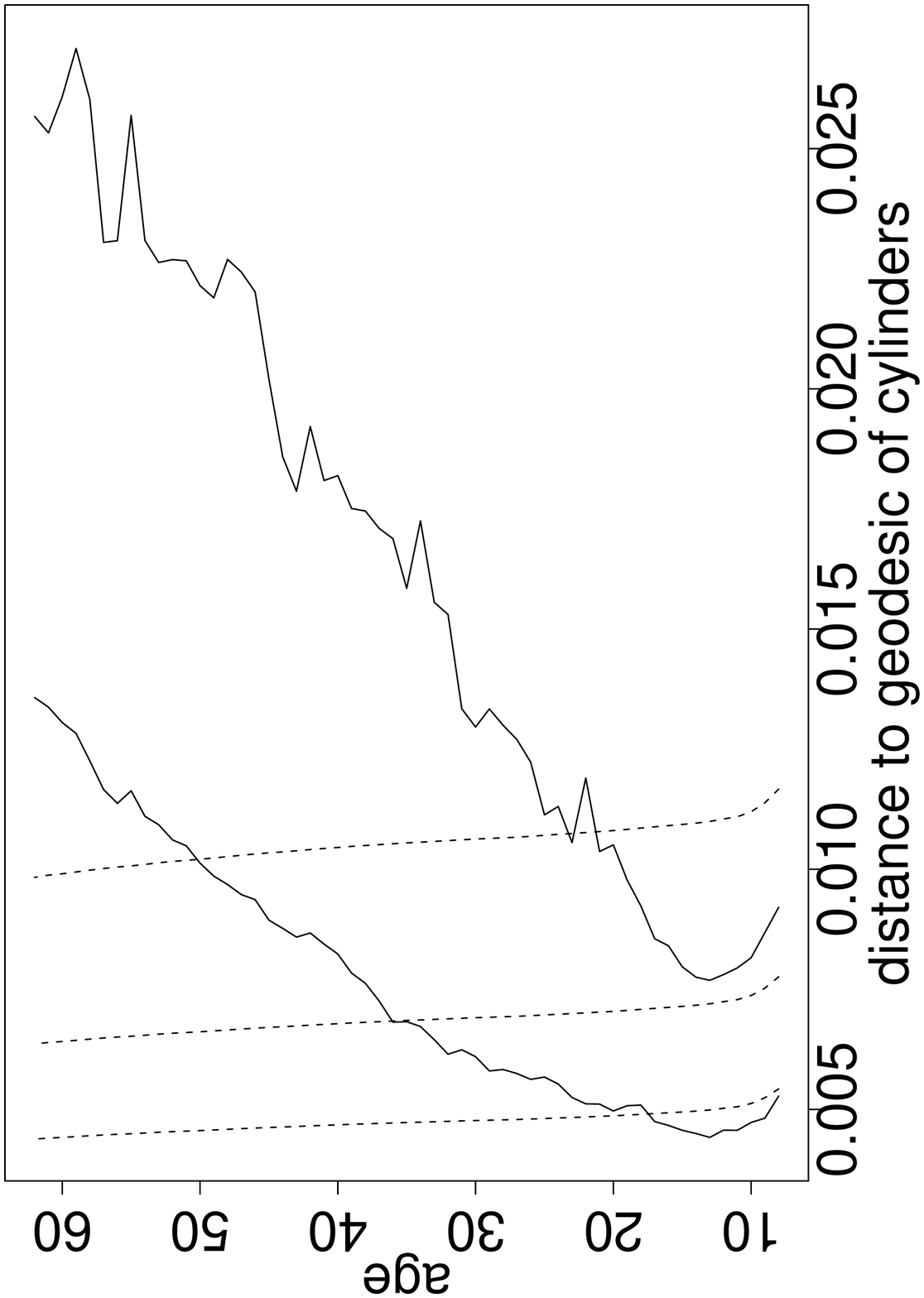}
	 \includegraphics[angle=-90,width=0.5\textwidth]{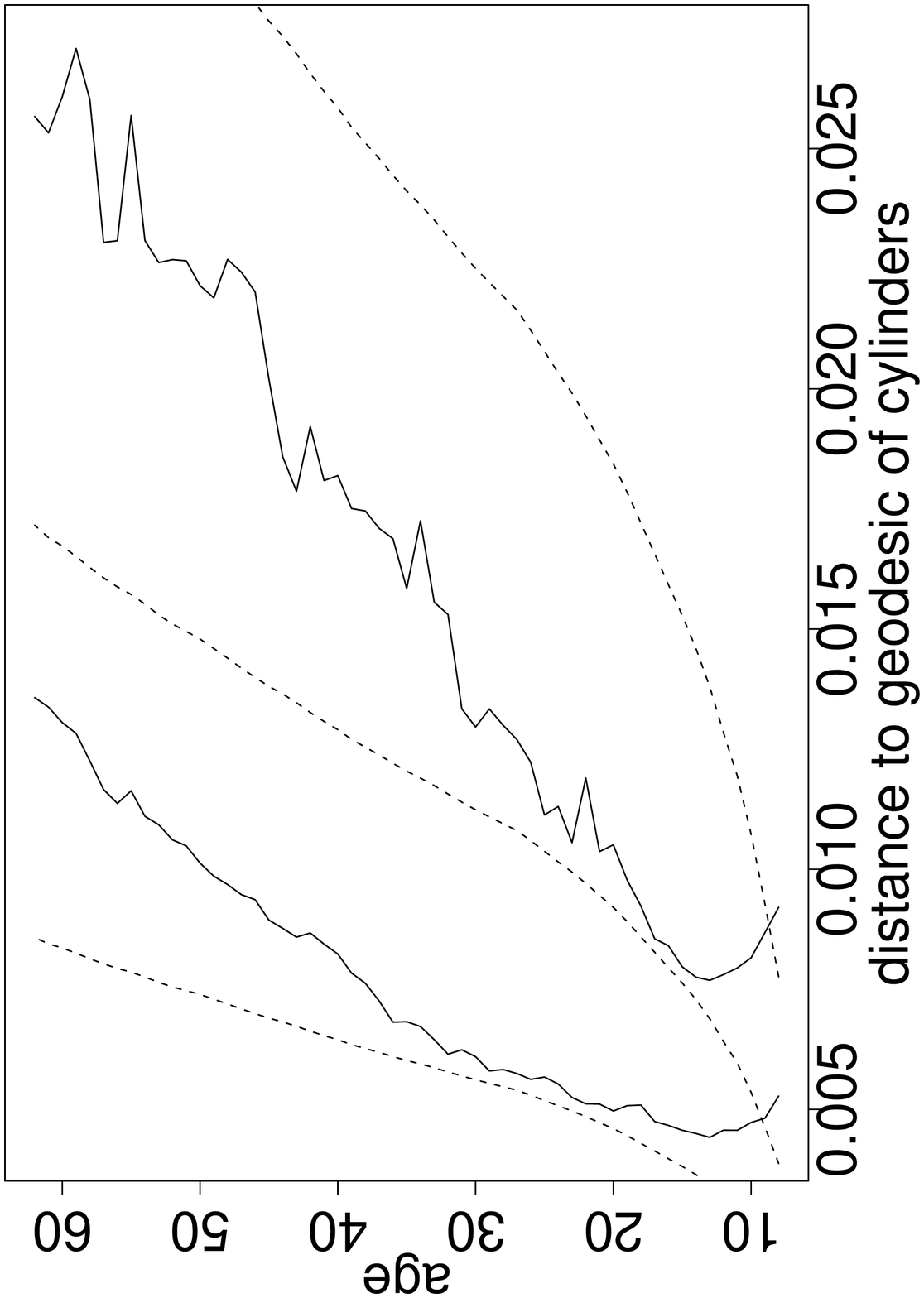}
	\caption{\it Solid: confidence band for the distance of the Procrustes mean shape of 1 m butt logs above breast height to the geodesic of cylinders for each year based on 200 boostrapped Procrustes means. For comparison dashed: uniform equalsize growth of elliptical cylinders (left display) starting with $(\alpha=\beta,t) = (1.07,0.93), (1.1,0.9), (1.15,0.85) $ from left to right; equalsize growth of elliptical cylinders (right display) with constant $(\alpha=\beta,t) = (1.025, 0.975), (1.05,0.95), (1.1,0.9)$  from left to right. \label{conf_dist_ideal_circ_geod:fig}}
	\end{figure}

%
%
%
%
%
%

	Summarizing, this study indicates that tree boles of young Douglas fir trees with little competition grow uniformly or even stronger towards cylinders. Older tree boles their crowns competing for light, grow as if they keep ellipticality and tapering constant. These findings 
	certainly call for more elaborate research, e.g. tapering and ellipticality can be investigated by developing a method to study distance and projection to the submanifold of elliptical-like frusta.

\section{Conclusion and Outlook}

	In this paper for the statistical analysis of 3D shape, an asymptotic result for mean shape has been derived, a classical perturbation model and a newly proposed perturbation model for the statistical analysis of diffusion tensors has been revisited, and inference on the temporal deviation of the 3D shape of tree boles from cylinders has been performed.

	Although Procrustes means are very popular, it seems that due to a misunderstanding, the notion of 3D \emph{Procrustes population means} had not been quite available in the community. Instead, population means of a perturbation model 
	had been estimated by \emph{Procrustes sample means} which were, as was well known, ``consistent'' estimators under isotropic 2D errors. Introducing Fr\'echet $\rho$-means in this work and extending the available  Central-Limit Theorem 
	to underlying non-metrical distances, in particular, the issue of ``consistency'' has been identified as an ``incompatibility'' of the perturbation model with the shape space's geometry. Moreover, we recalled a Fr\'echet $\rho$-mean introduced by \cite{Z94} 
	the computation of which 
	in 3D is computationally slightly less costly than the computation of Procrustes means. This result allows for one-sample tests for a population Procrustes or Ziezold mean on the manifold part of the shape space, since due to strong consistency, sample Procrustes or Ziezold means will eventually come to lie on the manifold part a.s. For a two-sample test, one would need to ensure that Procrustes and Ziezold means are contained on the manifold part, if the underlying random shapes are a.s. contained in the manifold part. Settling this issue is the subject of a separate research, cf. \cite{H_meansmeans_10}. 

	While many means, different from classical intrinsic, extrinsic or Procrustes means are Fr\'echet $\rho$-means (e.g. geometric medians of \cite{fletch9} and (penalized) weighted Procrustes means of \cite{DKZL09}) it would be interesting to verify whether  a larger class of means, e.g. the semi-metrical mean introduced in \cite{STCB07b} and successfully employed in computer vision also fall into this setup. 

	Within the discussion of ``consistency'', the modeling of \cite{DKZ09} for diffusion tensor imaging has been extended 
	to diffusion tensors mildly deficient in rank. To the knowledge of the author, this is the first framework allowing for statistical inference on non-regular diffusion tensors without utilizing a Euclidean embedding, say, in the space of symmetric matrices. Of course in many applications, the interest lies specifically in degenerate tensors because these indicate a strong directional flow. The finding that mildly rank-deficient diffusion tensors are only ``mildly perturbation inconsistent'' may  provide for another motivation for the approach of \cite{DKZ09,DKZL09}. Following \cite{DKZ09}, we have used Procrustes means. For the underlying reflection shape space, extrinsic Schoenberg means qualify as well, they compute much faster than Procrustes or Ziezold means. As the most important advantage, the two-sample tests of \cite{B08} can then be performed even including rank 1 diffusion tensors. A drawback, however, may result from a possible insensitivity of Schoenberg means to degeneracy thus yielding a smaller discrimination power (cf. \cite{H_meansmeans_10} for a detailed discussion). These issues certainly warrant further research.

	Briefly compiling the findings on tree boles we can add to the well known fact that 
	longest boles can be found in dense stands, 
	it seems that the most cylindrical boles can be found in young stands with low competition.
	~\\~\\ \noindent{\bf
	A geodesic perturbation model.}
	As demonstrated in this work, for most practical applications in 3D avoiding degenerate configurations, perturbation models may be considered compatible with the shape space's geometry for isotropic error. Note that the shapes of entire tree boles are nearly singular (cf. \cite{HHM07}). For short bole frusta, however, such models may still serve  intuition and for sufficiently small error provide an adequate approximation. As one of such consider the \emph{geodesic perturbation model}
	\begin{eqnarray}\label{geod_perturbation_model:eq}
	x_i &=& g_i\lambda_i(\gamma(s_i) + \epsilon_i)
	\end{eqnarray}
	with a straight line $s \to \gamma(s) = \mu + s\nu$ in configuration space $F_m^k$ such that locally all points on $\gamma$ are in optimal  position to each other. 
	For 3D frusta, 
	the geodesic of cylinders has been used here, any other geodesic of specific frusta can be used similarly. Moreover, based on larger samples, a geodesic perturbation model within the space of elliptical-like frusta can be used to assess specific model parameters. In particular, a close investigation of such model parameters may lead to inference on the impact of environmental effects on the shape of tree boles building on artificially induced modifications of biological tree parameters as in \cite{gaslob04}.


\vspace{0.4cm}
\noindent\textbf{Acknowledgments:} 
The author would like to express his sincerest gratitude to the late Herbert Ziezold (1942 -- 2008) whose untimely death was most unfortunate. He would also like to thank Dieter Gaffrey and the colleagues from the Institute for Forest Biometry and Informatics at the University of G\"ottingen for discussion of and supplying with the tree-bole data. Also he is grateful for helpful discussions with Thomas Hotz, John Kent and Axel Munk. 

\appendix
\section{Appendix: Proofs}

 	\paragraph{The Central-Limit-Theorem for Fr\'echet $\rho$-means.}
%
	With the notation of Section \ref{Frech_means:scn}, suppose that $\rho: Q \times P\to \mathbb [0,\infty)$ is a map,  $\rho^2$ smooth in the second component as specified below. 
	In a local chart $(\phi,U)$ of $P$ near $\phi^{-1}(0)\in P$, denote by $\grad_2 \rho(q,p)^2$ the gradient of $x \mapsto \rho(q,\phi^{-1}(x))^2$ and by $H_2\rho(q,p)^2$ the corresponding Hessian matrix of second order derivatives. Then similar to \citet[p.1230]{BP05} consider the following integrability condition on a random variable $X$ on $Q$ at a location $\mu\in P$:
	\begin{eqnarray}\label{mean_integrability:cond}
	\left.\begin{array}{rcl}
	\mathbb E(\grad_2\rho(X,\mu)^2)&&\mbox{exists,}\\	 
	\cov(\grad_2\rho(X,\mu)^2)&&\mbox{exists,}\\	 
	\mathbb E(H_2\rho(X,\nu)^2)&&\mbox{exists for $\nu$ near $\mu$ and is continuous at $\nu=\mu$.}
	\end{array}
	\right\}
	\end{eqnarray}
	In analogy to condition (\ref{bdd_away:cond}) consider 
	\begin{eqnarray}\label{bdd_away_general:cond}
	\exists \epsilon >0 &\st& x \mapsto \rho(X,\phi^{-1}(x))^2 \mbox{ is smooth for } |x|<\epsilon  \mbox{ a.s.}\,.
	\end{eqnarray}
	The validity of (\ref{mean_integrability:cond}) and (\ref{bdd_away_general:cond}) is independent of the particular chart chosen. If $\rho$ is the Euclidean distance or $\rho = d_{S\Sigma_m^k}^{(i)}$, then (\ref{mean_integrability:cond}) is valid if $\mathbb E(\|X\|^2)<\infty$.

	If there is a discrete group $H$ acting on $P$ and $\mu \in P$ such that $E^{(\rho)}= \{h\mu:h\in H\}$, we  
	say that the Fr\'echet population $\rho$-mean set 
	is \emph{unique up to the action of $H$}. 
	
	\begin{Th}\label{CLT:th} 
	Suppose that $\mu$ is a point in the Fr\'echet $\rho$-mean set unique up to the action of $H$ on a manifold $P$ with respect to a continuous function $\rho:Q\times P\to \mathbb R$, $\rho^2$ smooth in the second component in the sense of (\ref{bdd_away_general:cond}), satisfying strong consistency, where $Q$ is a topological space. 
	If 
	\begin{enumerate}
	 \item[] for any measurable choice $p_n \in E^{(\rho)}_n$ there is a sequence $h_n\in H$ such that $\mu_n=h_np_n \to \mu $ a.s., and if
	\begin{enumerate}
	\item[(i)] $X$ has compact support or
	\item[(ii)] the integrability conditions (\ref{mean_integrability:cond}) are satisfied at $\mu$ 
	\end{enumerate}
	\end{enumerate}
	then for any measurable choice $p_n \in E^{(\rho)}_n$ there is a sequence $h_n\in H$ such that $\mu_n=h_np_n$ satisfies a CLT.
	In a suitable chart $(\phi,U)$ the corresponding matrices from Definition \ref{CLT:def} are given by
	$$ A = \mathbb E(H_2\rho(X,\mu)^2),~~\Sigma = \cov(\grad_2\rho(X,\mu)^2)\,.$$
	\end{Th}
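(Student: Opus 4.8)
The plan is to reduce the statement to a standard $M$-estimation / $Z$-estimation argument carried out in a fixed local chart $(\phi,U)$ around $\mu = \phi^{-1}(0)$, exploiting that the sample means solve a gradient equation. First I would use the strong consistency hypothesis: for a measurable selection $p_n \in E_n^{(\rho)}$ there is $h_n \in H$ with $\mu_n = h_n p_n \to \mu$ a.s., so eventually $\mu_n \in U$ and, by the smoothness condition (\ref{bdd_away_general:cond}), the function $x \mapsto \sum_{j=1}^n \rho(X_j,\phi^{-1}(x))^2$ is smooth in a neighborhood of $\phi(\mu)=0$ with high probability. Since $\mu_n$ minimizes this sum over $P$ and lies in the interior of $U$, its chart image $x_n = \phi(\mu_n)$ satisfies the first-order condition
\begin{eqnarray*}
\sum_{j=1}^n \grad_2 \rho(X_j,\phi^{-1}(x_n))^2 &=& 0\,.
\end{eqnarray*}

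The core of the argument is then a Taylor expansion of this estimating equation about the origin. Writing $G_n(x) = \tfrac{1}{n}\sum_{j=1}^n \grad_2 \rho(X_j,\phi^{-1}(x))^2$, I would expand $0 = G_n(x_n) = G_n(0) + \big(\tfrac{1}{n}\sum_{j=1}^n H_2\rho(X_j,\phi^{-1}(\xi_n))^2\big)\,x_n$ for some intermediate point $\xi_n$ on the segment between $0$ and $x_n$. The population first-order condition at the mean gives $\mathbb E(\grad_2\rho(X,\mu)^2) = 0$, so by the classical multivariate CLT, under the integrability conditions (\ref{mean_integrability:cond}) (or trivially under compact support), $\sqrt{n}\,G_n(0) \to \mathcal{N}(0,\Sigma)$ in distribution with $\Sigma = \cov(\grad_2\rho(X,\mu)^2)$. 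For the Hessian average, consistency $\xi_n \to 0$ a.s. together with the continuity-in-$\nu$ and integrability of $\mathbb E(H_2\rho(X,\nu)^2)$ in (\ref{mean_integrability:cond}), plus a uniform law of large numbers on a small compact neighborhood of $0$, yields $\tfrac{1}{n}\sum_{j=1}^n H_2\rho(X_j,\phi^{-1}(\xi_n))^2 \to A$ a.s. with $A = \mathbb E(H_2\rho(X,\mu)^2)$. Rearranging gives $\sqrt{n}\,A\,x_n = -\sqrt{n}\,G_n(0) + o_P(1)$, which is exactly the CLT of Definition \ref{CLT:def} with the claimed $A$ and $\Sigma$; chart-independence follows from the $\delta$-method remark after Definition \ref{CLT:def}.

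The main obstacle I anticipate is the passage from minimization to the stationarity equation and the validity of the Taylor expansion, because $\rho^2$ is only asserted smooth \emph{a.s.} in a random neighborhood via (\ref{bdd_away_general:cond}) rather than globally: one must argue that the randomness of the smooth region is controlled uniformly enough that, almost surely for all large $n$, \emph{every} summand $\rho(X_j,\phi^{-1}(x))^2$ is smooth on a common deterministic neighborhood of the origin. This is where the bounded-away-from-the-cut-locus flavor of (\ref{bdd_away_general:cond}) is essential, and where the dominated-convergence / uniform-integrability control needed to interchange expectation with differentiation (justifying $\mathbb E(\grad_2\rho(X,\mu)^2)=0$ from the population minimization) must be supplied. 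A secondary technical point is handling the discrete group $H$: the selection $h_n$ must be shown measurable and the realignment $\mu_n = h_n p_n$ compatible with working in a single chart, which I would dispatch by noting that the $H$-orbit of $\mu$ is discrete so that a.s. convergence forces $h_n$ eventually constant in each sample path, reducing to the ordinary unique-mean case.
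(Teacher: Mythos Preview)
Your proposal is correct and takes essentially the same approach as the paper: a Taylor expansion of the empirical gradient equation in a local chart, the classical CLT for the gradient term at the origin, and the SLLN plus continuity of the expected Hessian for the second-order term. The only minor technical refinement is that the paper applies the mean value theorem component-wise (with separate intermediate points $t_1,\ldots,t_D$ for each coordinate of the gradient) rather than invoking a single intermediate $\xi_n$, which is not available for vector-valued maps.
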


	\begin{proof} Obviously, if $X$ has compact support then (\ref{mean_integrability:cond}) is satisfied. Hence we may assume the case (ii).  We adapt the ideas laid out in \citet[p. 1229--1230]{BP05}. Let $D$ denote the dimension of $P$ and consider a local chart $(U,\phi)$ near $\mu=\phi^{-1}(0)$. 
	We have a.s. eventually that $\mu_n\in U$, then $\phi(\mu_n) =x_n \to 0$ a.s. Abbreviating 
	$g(x) = \sum_{j=1}^n\rho(X_j,\phi^{-1}(x))^2$, $\grad g(x) = (g_1(x),\ldots,g_D(x))^T$ and $Hg(x)$ for the Hessian 
	we have by (\ref{bdd_away_general:cond}) the Taylor expansion
	{\footnotesize \begin{eqnarray*}
	0&=& \frac{1}{\sqrt{n}}\grad g(x_n)~=~ \frac{1}{\sqrt{n}}\grad g(0) + \frac{1}{\sqrt{n}} \left(\begin{array}{c}
	(\grad g_1(t_1x_n))^T\cdot x_n\\
	\vdots\\
	(\grad g_D(t_Dx_n))^T\cdot x_n
	 \end{array}\right)\\
	&=& \frac{1}{\sqrt{n}}\grad g(0) + \frac{1}{n}Hg(0)\cdot \sqrt{n} x_n+ \frac{1}{n}\left(\begin{array}{c}
	(\grad g_1(t_1x_n)-\grad g_1(0))^T\cdot \sqrt{n}x_n\\
	\vdots\\
	(\grad g_D(t_Dx_n)-\grad g_D(0))^T\cdot \sqrt{n}x_n
	 \end{array}\right)
	\end{eqnarray*}}\noindent
	for suitable $0\leq t_1,\ldots,t_D\leq 1$ a.s.
	In conjunction with the classical CLT, the first two conditions in (\ref{mean_integrability:cond}) ensure that
	{\footnotesize$$ \frac{1}{\sqrt{n}}\grad g(0)~=~\frac{1}{\sqrt{n}}\sum_{j=1}^n  \grad_2 \rho(X_j,\phi^{-1}(0))^2 ~\to~ -{\cal G}$$}in distribution for some Gaussian matrix ${\cal G}$ with mean $\mathbb E\big(\grad_2 \rho(X,\phi^{-1}(0))^2\big) =0$ and covariance matrix $\cov\big(\grad_2 \rho(X,\phi^{-1}(0))^2\big)$. The third condition guarantees the Strong Law of Large Numbers for the mean of random Hessians 
	{\footnotesize$$ \frac{1}{n}Hg(0)~=~\frac{1}{n} \sum_{j=1}^n  H_2 \rho(X_j,\phi^{-1}(0))^2 ~\to~ \mathbb E(H_2\rho(X,\mu)^2)=:A~~a.s.$$}Note that the $k$-th entry of $ \frac{1}{n}\Big(\grad g_i(t_ix_n)-\grad g_i(0)\Big) $ for
	$1\leq i,k\leq D$ is
	{\footnotesize$$ \frac{1}{n}\sum_{j=1}^n\left(\frac{\partial^2}{\partial_i\partial_k}\rho(X_j,\phi^{-1}(t_ix_n))^2 -\frac{\partial^2}{\partial_i\partial_k}\rho(X_j,\phi^{-1}(0))^2\right)$$}which 
	tends to zero a.s. again by the third condition of (\ref{mean_integrability:cond}).
	In consequence
	$A \sqrt{n} x_n ~\to~{\cal G}$
	in distribution yielding the assertion.	
	\end{proof}
%
%
	\noindent{\it Proof of Theorem \ref{diff_tens_kend_ss:thm}.}
	First in I we define an extension of the Cholesky factorization which yields that $\tau$ and $\tau_s$ are well defined. Then in II we show the topological properties of $\tau$ and $\tau_s$. The other assertions of Theorem \ref{diff_tens_kend_ss:thm} are then straightforward. 

	I: Define an extension of the Cholesky factorization by $P(m)\ni a=v\lambda^2 v^T \to b=\sqrt{\lambda}v^T = gu \to u\in UT^\geq(m)$ where  $\lambda$ is a diagonal matrix with non-negative entries, $v,g\in SO(m)$. Then, obviously $u^Tu=a$. The factorization $b = gu$ is obtained by the usual Gram-Schmidt process, i.e. the first column $g_1$ of $g$ is $b_{(1)}/\|b_{(1)}\|$ where $b_{(1)}$ denotes the first non-zero column in $b$, the second column is $g_2= (b_{(2)} - g_1 b_{(2)}^Tg_1)/\|b_{(2)} - g_1 b_{(2)}^Tg_1\|$ with $b_{(2)}$, the first column in $b$ linearly independent of $g_1$, etc.. Either proceed until $g_m$ (note that we have $u_{mm}\geq 0$ in consequence of $\det(b)\geq 0$ and $g\in SO(m)$) or extend with arbitrary columns such that $g\in SO(m)$. In order to see that this mapping is well defined suppose that $u'\in UT^\geq(m)$ and $g'\in SO(m)$ with
	$u'^Tu' = uu^T$. Set
	$$u = \left(\begin{array}{c}q\\\hline 0\end{array}\right),~~u' = \left(\begin{array}{c}q'\\\hline 0\end{array}\right)$$
	with a matrices $q,q' \in M(r,m)$ of rank $r$. Then, there are indeces $1\leq j_1\leq \ldots\leq j_r$ such that
	the matrix $\widetilde{q}=(q_{j_1},\ldots,q_{j_r})$ composed of linearly independent columns $q_{j_1},\ldots,q_{j_r}$ of $q$ is upper triangular and regular, i.e. $ \widetilde{q}\in UT^+(r)$. Denoting by 
	$\widetilde{q}'=(q'_{j_1},\ldots,q'_{j_r})$ the matrix composed of the corresponding columns of $q'$ note that by construction $\widetilde{q}'\in UT(r)$. By hypothesis, $\widetilde{q}'^T\widetilde{q}'=\widetilde{q}^T\widetilde{q}$ which yields with the classical Cholesky decomposition that $\widetilde{q}'=\widetilde{q}$. Since $\widetilde{q}'^T\hat{q}' = \widetilde{q}^T\hat{q}$ with ``$\hat{\cdot}$'' standing for the complementary columns in $\{1,\ldots,r\}$ we have indeed $q'=q$, i.e. $u'=u$. Obviously this mapping is an extension of the classical Cholesky factorization.
	
	The bijective mappings $\phi :\Sigma_{m}^{m+1}\to SUT(m)$ and $\phi_s :S\Sigma_{m}^{m+1}\to UT(m)$ are similarly obtained (cf. \citet[Section 1.3]{KBCL99}) by a Gram-Schmidt decomposition of $a \in [a] \in S\Sigma_{m}^{m+1}$. To this end choose the sign of $u_{mm}$ such that $g\in SO(m)$ which possibly gives a negative entry $u_{mm}$. Note that 
	\begin{eqnarray*}
	\phi_s\circ \tau_s\big(P(n)\big) = UT^{\geq}(m)&\mbox{and}& \phi\circ \tau\big(P(n)\big) = SUT^{\geq}(m)\,.
	\end{eqnarray*}
	This shows that $\tau$ and $\tau_s$ are indeed well defined. Moreover, $\tau_s$ is injective.

	II: We now rely on the decomposition of $UT(m)$ and $SUT(m)$ as simplicial complexes by \citet[Chapter 2]{KBCL99} defined through the topologies of $S\Sigma_m^{m+1}$ and $\Sigma_m^{m+1}$, respectively. Verify that the boundary identification on the respective subsets $UT^\geq(m)$ and $SUT^\geq(m)$ is the same as given by the natural topology of $P(m)$. Hence,  $\tau$ and $\tau_s$ are both open and continuous.
	\qed

	\begin{Lem}\label{chol:lem} Let $e\in F_m^{m+1}$ be the unit matrix and $a=(e + \epsilon)^T(e + \epsilon)$ with error $\epsilon\in F_m^{m+1}$ independent standard normal in every component. Then, 
	$$\mathbb E\big({\rm chol}(a)\big)\neq e\,.$$\end{Lem}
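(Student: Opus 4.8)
The plan is to exhibit a single matrix entry of $\mathbb{E}({\rm chol}(a))$ that differs from the corresponding entry of $e$; the natural candidate is the $(1,1)$ entry, which can be read off without any Gram--Schmidt bookkeeping. First I would record that with $B:=e+\epsilon=I+\epsilon$ we have $a=B^TB$, that $B$ is invertible almost surely (its determinant is a nonconstant polynomial in the Gaussian entries, so it vanishes only on a null set), and hence that $a$ is strictly positive definite almost surely, so that ${\rm chol}(a)=u$ with $u$ upper triangular (and $u_{11}\geq 0$ by the domain $UT^{\geq}(m)$) satisfying $u^Tu=a$ is well defined a.s. Reading off the top-left entry of the identity $u^Tu=a$ and using that $u$ is upper triangular, so $u_{k1}=0$ for $k\geq 2$, gives $u_{11}^2=a_{11}=(B^TB)_{11}=\|B_{(1)}\|^2$, where $B_{(1)}$ denotes the first column of $B$. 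Taking the positive square root yields the clean identity $u_{11}=\|B_{(1)}\|$.

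Next I would identify the distribution of $B_{(1)}$. Its entries are $1+\epsilon_{11},\epsilon_{21},\ldots,\epsilon_{m1}$, so $B_{(1)}$ is a Gaussian vector with mean $\mathbf{e}_1=(1,0,\ldots,0)^T$ and covariance $I_m$. Consequently $\mathbb{E}(u_{11})=\mathbb{E}\big(\|B_{(1)}\|\big)$, and it remains to show that this expectation strictly exceeds $1=\|\mathbf{e}_1\|$. Here I would invoke Jensen's inequality for the convex Euclidean norm in the sharp form coming from Cauchy--Schwarz: for any integrable random vector $c$ with $\mathbb{E}(c)=\mu\neq 0$ one has $\|c\|\geq \langle c,\mu\rangle/\|\mu\|$, whence $\mathbb{E}\|c\|\geq\|\mu\|$, with equality if and only if $c$ lies almost surely on the ray $\{t\mu:t\geq 0\}$. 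Since $B_{(1)}$ is Gaussian with full support in $\mathbb{R}^m$, it is not concentrated on any half-line, so the inequality is strict and $\mathbb{E}(u_{11})>1$.

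Combining the two steps gives $\big(\mathbb{E}({\rm chol}(a))\big)_{11}=\mathbb{E}(u_{11})>1=e_{11}$, and therefore $\mathbb{E}({\rm chol}(a))\neq e$, as claimed. Integrability causes no trouble, since $\|B_{(1)}\|\leq 1+\|\epsilon_{(1)}\|$ has finite first moment. The only point requiring genuine care, and thus the main (if modest) obstacle, is the strictness in Jensen's inequality: one must argue the equality case explicitly, ruling it out from the non-degeneracy of the Gaussian perturbation, rather than merely citing convexity, since the Euclidean norm is convex but not strictly convex. The reduction to the top-left entry is what keeps the argument short; attempting to control the full matrix $\mathbb{E}({\rm chol}(a))$ or to evaluate it in closed form would be unnecessarily laborious.
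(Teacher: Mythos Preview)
Your proof is correct and follows the same reduction as the paper: both arguments isolate the $(1,1)$ entry of ${\rm chol}(a)$, identify it as $\|B_{(1)}\|$ with $B_{(1)}\sim N(\mathbf{e}_1,I_m)$, and then show $\mathbb{E}\|B_{(1)}\|>1$. The difference lies only in this last step. The paper symmetrizes in $\epsilon_{11}$ via the triangle inequality $\sqrt{(1+x)^2+y^2}+\sqrt{(1-x)^2+y^2}\geq 2\sqrt{x^2+y^2}$ to bound $\mathbb{E}\|B_{(1)}\|$ below by the chi mean $\sqrt{2}\,\Gamma((m+1)/2)/\Gamma(m/2)$, which exceeds $1$ for $m>1$; your Cauchy--Schwarz/Jensen argument with the explicit equality case is shorter, avoids the special-function computation, and incidentally also covers $m=1$.
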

	\begin{proof}
	 Let $e +\delta ={\rm chol}(a)$ giving for the entry in the first column and row 
	{\footnotesize 
	$$1+\delta_{11} = \sqrt{1+2\epsilon_{11} + \sum_{j=1}^m\epsilon_{j1}^2}\,.$$}Since $\sqrt{(1+x)^2+y^2} + \sqrt{(1-x)^2+y^2}\geq 2\sqrt{x^2+y^2}$ observe that 
	$$ \mathbb E\left(\sqrt{(1+\epsilon_{11})^2+\sum_{j=2}^m\epsilon_{j1}^2}\right) \geq  \mathbb E\left(\sqrt{\sum_{j=1}^m\epsilon_{j1}^2}\right) =\sqrt{2}\,\frac{\Gamma\left(\frac{m+1}{2}\right)}{\Gamma\left(\frac{m}{2}\right)} ~>~ 1$$
	for $m>1$.
	Hence $\mathbb E(e+\delta) =e$ cannot be. 
	\end{proof}
%
%
%
	\noindent{\it Proof of Theorem \ref{parallel_cone_scn_geodesic:thm}.}
	 Since the diagonal of $w_{\alpha,\beta,r,t}w_{\alpha',\beta',r',t'}^T$ consists of positive values only, all elements of $PS_{{\bf a},{\bf b}}$ are mutually unique in optimal position (\citet[p. 114]{KBCL99}). Since straight lines are mapped under the Helmert sub-matrix to straight lines in the configuration space which project to great circles in the pre-shape space, the straight line segment between $w_{\alpha,\beta,r,t}$ and $w_{\alpha',\beta',r',t'}$ maps to a segment on a horizontal geodesic on $S_m^k$. In consequence, the projection of $PS_{{\bf a},{\bf b}}$ to $(\Sigma_3^{2\kappa})^*$ is a totally geodesic submanifold, its horizontal lift  is again $PS_{{\bf a},{\bf b}}$, cf. \citet[Section 2]{HHM07}.
	\qed 

\bibliographystyle{elsart-harv}
\bibliography{shape,botany}
\end{document}